\documentclass{article}

     \PassOptionsToPackage{numbers, compress}{natbib}

 \usepackage[final]{neurips_2021}




\usepackage{amsmath,amsfonts,bm}









\def\eqref#1{equation~\ref{#1}}









\def\1{\bm{1}}








\def\va{{\bm{a}}}

\def\ve{{\bm{e}}}

\def\vh{{\bm{h}}}

\def\vq{{\bm{q}}}

\def\vs{{\bm{s}}}

\def\vu{{\bm{u}}}
\def\vv{{\bm{v}}}

\def\vx{{\bm{x}}}

\def\vz{{\bm{z}}}


\def\evv{{v}}

\def\evx{{x}}


\def\mB{{\bm{B}}}
\def\mC{{\bm{C}}}
\def\mD{{\bm{D}}}

\def\mI{{\bm{I}}}

\def\mM{{\bm{M}}}

\def\mP{{\bm{P}}}

\def\mU{{\bm{U}}}

\DeclareMathAlphabet{\mathsfit}{\encodingdefault}{\sfdefault}{m}{sl}
\SetMathAlphabet{\mathsfit}{bold}{\encodingdefault}{\sfdefault}{bx}{n}











\newcommand{\E}{\mathbb{E}}

\newcommand{\R}{\mathbb{R}}



\usepackage[utf8]{inputenc} 
\usepackage[T1]{fontenc}    
\usepackage{hyperref}       
\usepackage{url}            
\usepackage{booktabs}       
\usepackage{amsfonts}       
\usepackage{nicefrac}       
\usepackage{microtype}      
\usepackage{xcolor}         
\usepackage{multirow}
\usepackage{wrapfig}
\usepackage{graphicx}
\usepackage{subcaption}
\usepackage{amssymb}
\usepackage{pifont}
\usepackage{mathptmx}
\usepackage{caption}
\usepackage{mathtools}
\usepackage{amsthm}

\title{On Path Integration of Grid Cells: \\Group Representation and Isotropic Scaling}

%

\author{
    Ruiqi Gao$^1$\thanks{The author is now a Research Scientist at Google Brain team.}
 \\
    \texttt{ruiqigao@ucla.edu}
    \And Jianwen Xie$^2$ \\
    \texttt{jianwen@ucla.edu}
    \And Xue-Xin Wei$^3$ \\
    \texttt{weixx@utexas.edu}\\
    \And Song-Chun Zhu$^{1, 4, 5}$ \\
    \texttt{sczhu@stat.ucla.edu}
     \And Ying Nian Wu$^1$ \\
    \texttt{ywu@stat.ucla.edu} 
    \And
    \normalfont{$^1$Department of Statistics, UCLA\quad{}}
    \normalfont{$^2$Cognitive Computing Lab, Baidu Research} \\
    \normalfont{$^3$Department of Neuroscience, UT Austin\quad{}}
      \normalfont{$^4$Department of Computer Science, UCLA}\\
    $^5$Beijing Institute for General Artificial Intelligence (BIGAI)}

\newtheorem{prop}{Proposition}
\newtheorem{cond}{Condition}
\newtheorem{theorem}{Theorem}

\begin{document}

\maketitle

\begin{abstract}

Understanding how grid cells perform path integration calculations remains a fundamental problem. In this paper, we conduct theoretical analysis of a general representation model of path integration by grid cells, where the 2D self-position is encoded as a higher dimensional vector, and the 2D self-motion is represented by a general transformation of the vector. We identify two conditions on the transformation. One is a group representation condition that is necessary for path integration. The other is an isotropic scaling condition that ensures locally conformal embedding, so that the error in the vector representation translates conformally to the error in the 2D self-position. Then we investigate the simplest transformation, i.e., the linear transformation, uncover its explicit algebraic and geometric structure as matrix Lie group of rotation, and explore the connection between the isotropic scaling condition and a special class of hexagon grid patterns. Finally, with our optimization-based approach, we manage to learn hexagon grid patterns that share similar properties of the grid cells in the rodent brain. The learned model is capable of accurate long distance path integration. Code is available at \url{https://github.com/ruiqigao/grid-cell-path}.

\end{abstract}

\section{Introduction}

Imagine walking in the darkness. Purely based on the sense of self-motion, one can gain a sense of self-position by integrating the self motion - a process often referred to as path integration~\citep{Darwin1873,Etienne2004,hafting2005microstructure, fiete2008grid, mcnaughton2006path}. While the exact neural underpinning of path integration remains unclear, it has been hypothesized that the grid cells~\citep{hafting2005microstructure, fyhn2008grid, yartsev2011grid, killian2012map, jacobs2013direct, doeller2010evidence} in the mammalian medial entorhinal cortex (mEC) may be involved in this process~\citep{gil2018impaired, ridler2019impaired, horner2016grid}. 
The grid cells are so named because individual neurons exhibit striking firing patterns that form  hexagonal grids when the agent (such as a rat) navigates in a 2D open field ~\citep{Fyhn2004,hafting2005microstructure,Fuhs2006,burak2009accurate, sreenivasan2011grid, blair2007scale,Couey2013, de2009input,Pastoll2013,Agmon2020}.
The grid cells also interact with the place cells in the hippocampus~\citep{o1979review}. Unlike a grid cell that fires at the vertices of a lattice, a place cell often fires at a single (or a few) locations.

The purpose of this paper is to understand how the grid cells may perform path integration calculations. We study a general optimization-based representational model in which the 2D self-position is represented by a higher dimensional vector and the 2D self-motion is represented by a transformation of the vector. The vector representation can be considered position encoding or position embedding, where the elements of the vector may be interpreted as activities of a population of grid cells. The transformation can be realized by a recurrent network that acts on the vector. Our focus is to study the properties of the transformation. 

Specifically, we identify two conditions for the transformation: a group representation condition and an isotropic scaling condition, under which we demonstrate that the local neighborhood around each self-position in the 2D physical space is embedded conformally as a 2D neighborhood around the vector representation of the self-position in the neural space.

We then investigate the simplest special case of the transformation, i.e., linear transformation,  that forms a matrix Lie group of rotation, under which case we show that the isotropic scaling condition is connected to a special class of hexagonal grid patterns. Our numerical experiments demonstrate that our model learns clear hexagon grid patterns of multiple scales which share observed properties of the grid cells in the rodent brain, by optimizing a simple loss function. The learned model is also capable of  accurate long distance path integration.  

{\bf Contributions}. Our work contributes to understanding the grid cells from the perspective of representation learning. We conduct theoretical analysis of (1) general transformation for path integration by identifying two key conditions and a local conformal embedding property, (2) linear transformation by revealing the algebraic and geometric structure and connecting the isotropic scaling condition and a special class of hexagon grid patterns, and (3) integration of linear transformation model and linear basis expansion model via unitary group representation theory.  Experimentally we learn clear hexagon grid patterns that are consistent with biological observations, and the learned model is capable of accurate path integration.

\section{General transformation} 
 \begin{wrapfigure}{r}{0.5\linewidth}  
	\centering	
	\begin{tabular}{cc}
\includegraphics[width=.44\linewidth]{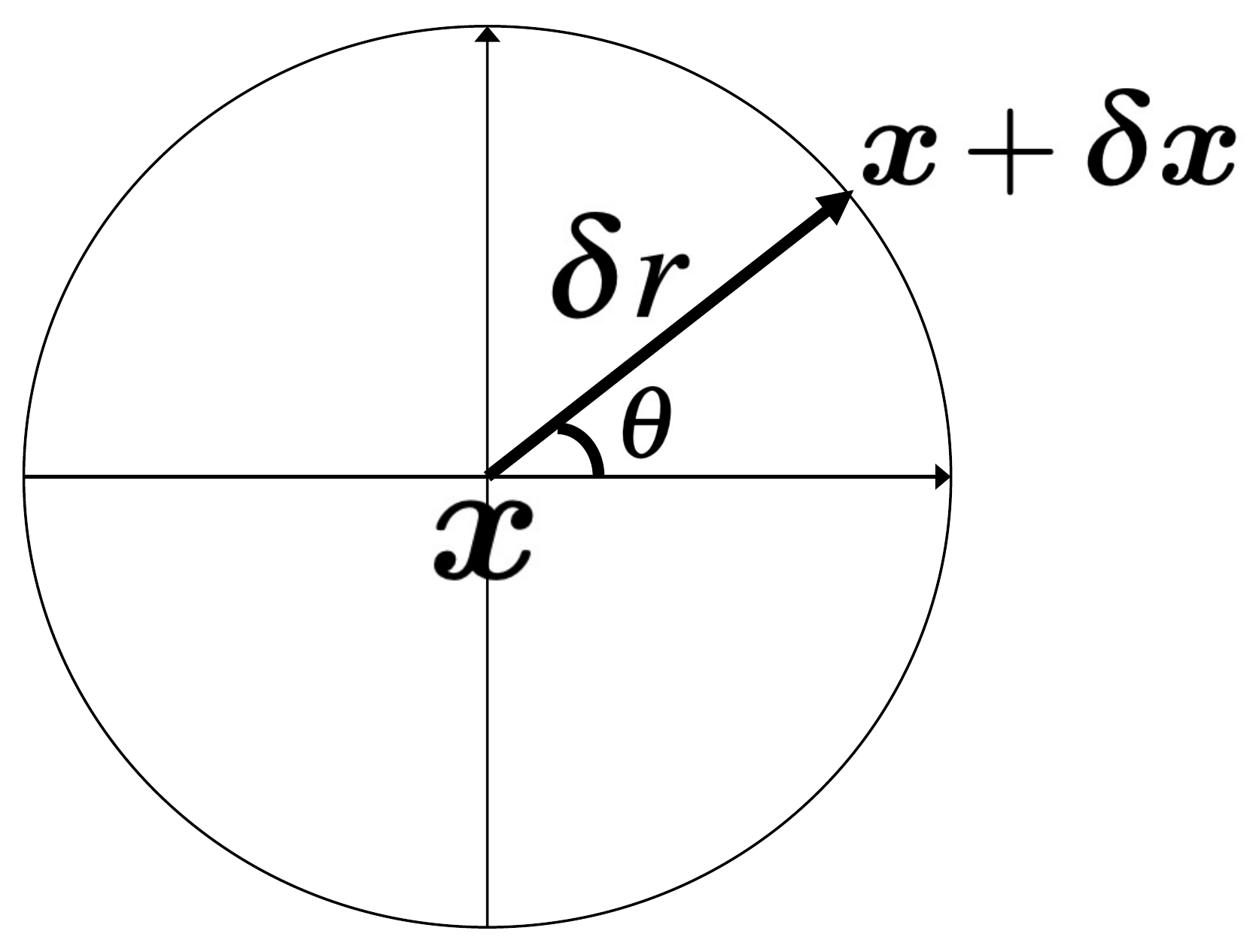}  &
		\includegraphics[width=.5\linewidth]{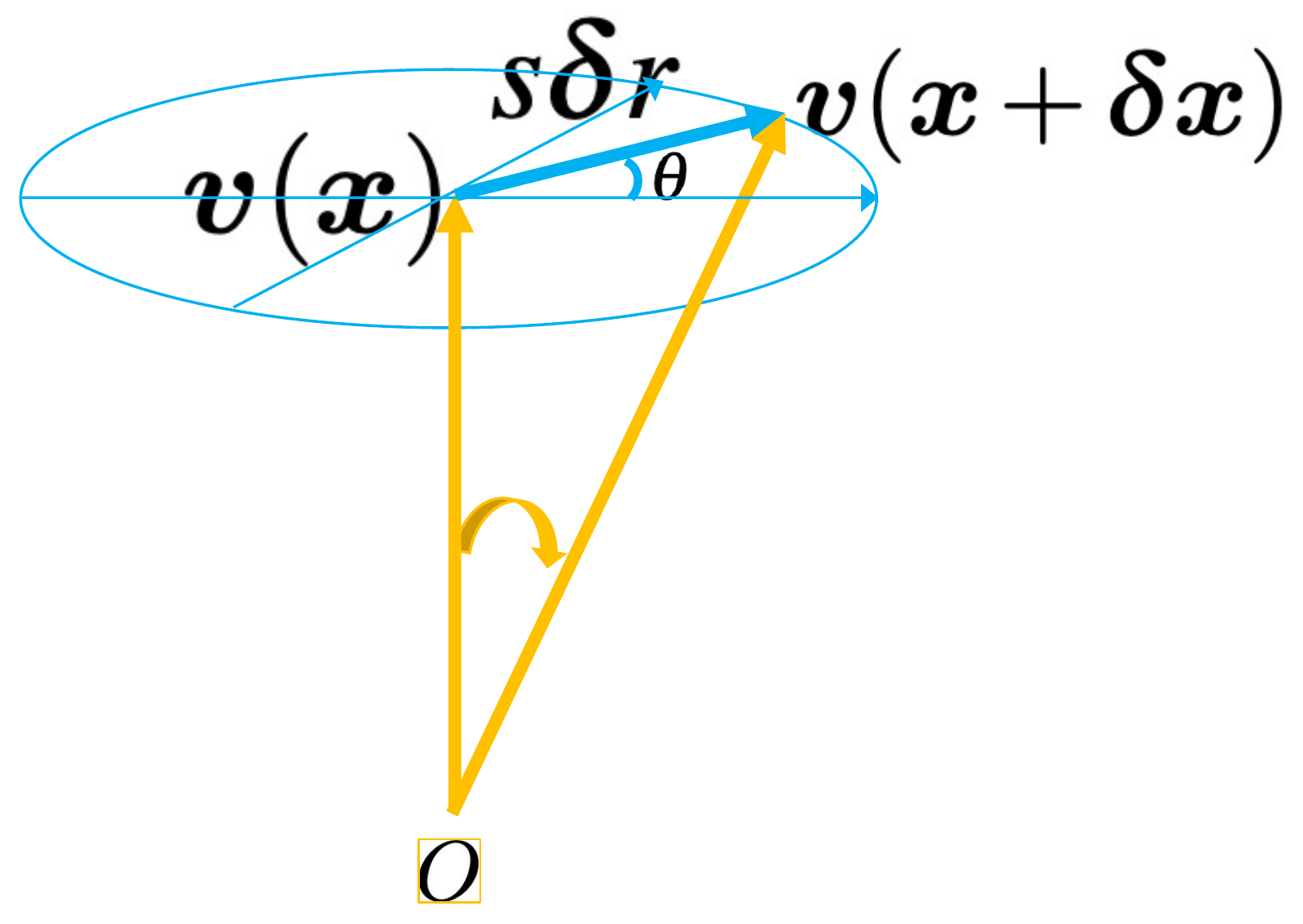}  \\
		{\small (a) physical space} & {\small (b) neural space}  
							\end{tabular} 						
	\caption{\small The local 2D polar system around self-position $\vx$ in the 2D physical space (a) is embedded conformally as a 2D polar system around vector $\vv(\vx)$ in the $d$-dimensional neural space (b), with a scaling factor $s$ (so that $\delta r$ in the physical space becomes $s \delta r$ in the neural space while the angle $\theta$ is preserved).  
	}  
	\vspace{-.4cm}
	\label{fig:g}
\end{wrapfigure} 

\subsection{Position embedding} 

Consider an agent (e.g., a rat) navigating within a 2D open field.  Let $\vx = (\evx_1, \evx_2)$ be the self-position of the agent. We assume that the self-position $\vx$ in the 2D physical space is represented by the response activities of a population of $d$ neurons (e.g., $d = 200$), which form a vector $\vv(\vx) = (\evv_i(\vx), i = 1, ..., d)^\top$ in the $d$-dimensional ``neural space'', with each element $\evv_i(\vx)$ representing the firing rate of one neuron when the animal is at location $\vx$. 

$\vv(\vx)$ can be called position encoding or position embedding. Collectively, $(\vv(\vx), \forall \vx)$ forms a {\em codebook} of $\vx \in \R^2$, and $(\vv(\vx), \forall \vx)$ is a {\em 2D manifold} in the $d$-dimensional neural space, i.e., globally we embed $\R^2$ as a 2D manifold in the neural space. Locally, we identify two conditions under which the 2D local neighborhood around each $\vx$ is embedded {\em conformally} as a 2D neighborhood around $\vv(\vx)$ with a scaling factor. See Fig. \ref{fig:g}. As shown in Section \ref{sect:hexagon}, the conformal embedding is connected to the hexagon grid patterns. 

\subsection{Transformation and path integration}

At self-position $\vx$,  if the agent makes a self-motion $\Delta \vx = (\Delta \evx_1, \Delta \evx_2)$, then it moves to $\vx + \Delta \vx$. Correspondingly, the vector representation $\vv(\vx)$ is transformed to $\vv(\vx+\Delta \vx)$. The general form of the transformation can be formulated as: 
\begin{eqnarray} 
    \vv(\vx + \Delta \vx)  = F(\vv(\vx), \Delta \vx). \label{eq:PI0}
\end{eqnarray}
The transformation $F(\cdot, \Delta \vx)$ can be considered a representation of $\Delta \vx$, which forms a 2D additive group.  We call Eq. (\ref{eq:PI0}) the {\em transformation model}. It can be implemented by a recurrent network to derive a path integration model: if the agent starts from $\vx_0$, and makes a sequence of moves $(\Delta \vx_t, t = 1, ..., T)$, then the vector is updated by  $\vv_{t} = F(\vv_{t-1}, \Delta \vx_t)$, where $\vv_0 = \vv(\vx_0)$, and $t = 1, ..., T$. 

\subsection{Group representation condition}  

The solution to the transformation model (Eq. (\ref{eq:PI0})) should satisfy the following condition. 
\begin{cond} \label{cond:1}
(Group representation condition) $(\vv(\vx), \forall \vx)$ and $(F(\cdot, \Delta \vx), \forall \Delta \vx)$ form a representation of the 2D additive Euclidean group $\R^2$ in the sense that  
\begin{eqnarray} 
   &&F(\vv(\vx), 0) = \vv(\vx), \; \forall \vx;  \label{eq:attract} \\
   &&F(\vv(\vx), \Delta \vx_1 + \Delta \vx_2) = F(F(\vv(\vx), \Delta \vx_1), \Delta \vx_2), \; \forall \vx, \Delta \vx_1, \Delta \vx_2. \label{eq:gr}
\end{eqnarray} 
\end{cond}
$(F(\cdot, \Delta \vx), \forall \Delta \vx)$ is a Lie group of transformations acting on the codebook manifold $(\vv(\vx), \forall \vx)$. 

The reason for (\ref{eq:attract}) is that if $\Delta \vx = 0$, then $F(\cdot, 0)$ should be the identity transformation. Thus the codebook manifold $(\vv(\vx), \forall \vx)$ consists of fixed points of the transformation $F(\cdot, 0)$. If $F(\cdot, 0)$ is furthermore a contraction around $(\vv(\vx), \forall \vx)$, then $(\vv(\vx), \forall \vx)$ are the attractor points. 

The reason for (\ref{eq:gr}) is that the agent can move in one step by $\Delta \vx_1 + \Delta \vx_2$, or first move by $\Delta \vx_1$, and then move by $\Delta \vx_2$. Both paths would end up at the same $\vx + \Delta \vx_1 + \Delta \vx_2$, which is represented by the same $\vv(\vx+\Delta \vx_1 + \Delta \vx_2)$. 

The group representation condition is a necessary self-consistent condition for the transformation model (Eq. (\ref{eq:PI0})). 
   
\subsection{Egocentric self-motion} 

Self-motion $\Delta \vx$ can also be parametrized egocentrically as $(\Delta r, \theta)$, where $\Delta r$ is the displacement along the direction $\theta \in [0, 2\pi]$, so that $\Delta \vx = (\Delta \evx_1 = \Delta r \cos \theta, \Delta \evx_2 = \Delta r \sin \theta)$. The egocentric self-motion may be more biologically plausible where $\theta$ is encoded by head direction, and $\Delta r$ can be interpreted as the speed along direction $\theta$. The transformation model  then becomes 
\begin{eqnarray} 
   \vv(\vx + \Delta \vx) = F(\vv(\vx), \Delta r, \theta),  \label{eq:PI2}
\end{eqnarray}
where we continue to use $F(\cdot)$ for the transformation (with slight abuse of notation). $(\Delta r, \theta)$ form a polar coordinate system around $\vx$. 

\subsection{Infinitesimal self-motion and directional derivative}  
 
In this subsection, we derive the transformation model for infinitesimal self-motion. While we use $\Delta \vx$ or $\Delta r$ to denote finite (non-infinitesimal) self-motion, we use $\delta \vx$ or $\delta r$ to denote infinitesimal self-motion. At self-position $\vx$, for an infinitesimal displacement $\delta r$ along direction $\theta$, $\delta \vx = (\delta \evx_1 = \delta r \cos \theta, \delta \evx_2 = \delta r \sin \theta)$. See Fig. \ref{fig:g} (a) for an illustration. Given that $\delta r$ is infinitesimal, for any fixed $\theta$, a first order Taylor expansion of $F(\vv(\vx), \delta r, \theta)$ with respect to $\delta r$  gives us 
\begin{align}
\vv(\vx + \delta \vx) &=F(\vv(\vx), \delta r, \theta) = F(\vv(\vx), 0, \theta) + F'(\vv(\vx), 0, \theta) \delta r + o(\delta r) \nonumber 
\\&= \vv(\vx) + f_\theta(\vv(\vx)) \delta r + o(\delta r), 
\label{eq:inf}
\end{align} 
 where $F(\vv(\vx), 0, \theta) = \vv(\vx)$ according to Condition \ref{cond:1}, and $f_\theta(\vv(\vx)) \coloneqq F'(\vv(\vx), 0, \theta)$ is the first derivative of $F(\vv(\vx), \Delta r, \theta)$ with respect to $\Delta r$ at $\Delta r = 0$. $f_\theta(\vv(\vx))$ is the {\em directional derivative} of $F(\cdot)$ at self-position $\vx$ and direction $\theta$. 
 
 For a fixed $\theta$, $(F(\cdot, \Delta r, \theta), \forall \Delta r)$ forms a one-parameter Lie group of transformations, and $f_\theta(\cdot)$ is the generator of its Lie algebra. 
   
 \subsection{Isotropic scaling condition} 
 
With the directional derivative, we define the second condition as follows, which leads to locally conformal embedding and is connected to hexagon grid pattern. 
 \begin{cond} \label{cond:2}
(Isotropic scaling condition) For any fixed $\vx$, $\|f_\theta(\vv(\vx))\|$ is constant over $\theta$. 
\end{cond}
Let $f_0(\vv(\vx))$ denote $f_\theta(\vv(\vx))$ for $\theta = 0$, and $f_{\pi/2}(\vv(\vx))$ denote $f_\theta(\vv(\vx))$ for $\theta = \pi/2$. Then we have the following theorem: 
 \begin{theorem} \label{theorem:1}
Assume group representation condition \ref{cond:1} and isotropic scaling condition \ref{cond:2}.  At any fixed $\vx$, for the local motion $\delta \vx = (\delta r \cos \theta,  \delta r \sin \theta)$ around $\vx$, let $\delta \vv = \vv(\vx+\delta \vx) - \vv(\vx)$ be the change of vector and $s = \|f_\theta(\vv(\vx))\|$, then we have $\|\delta \vv\| = s \|\delta \vx\|$. Moreover,
\begin{align}
\delta \vv  
= f_\theta(\vv(\vx)) \delta r + o(\delta r)= f_0(\vv(\vx)) \delta r \cos \theta + f_{\pi/2}(\vv(\vx)) \delta r \sin \theta + o(\delta r),
\end{align}
where $f_0(\vv(\vx))$ and  $f_{\pi/2}(\vv(\vx))$ are two orthogonal basis vectors of equal norm $s$. 
\end{theorem}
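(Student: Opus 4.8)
The plan is to recognize that the directional derivative $f_\theta(\vv(\vx))$ appearing in Eq.~(\ref{eq:inf}) is precisely the directional derivative of the embedding map $\vv:\R^2\to\R^d$ at $\vx$ along the unit direction $(\cos\theta,\sin\theta)$, and then to exploit the fact that directional derivatives are linear in the direction vector. First I would write down the two available first-order expansions of $\delta\vv = \vv(\vx+\delta\vx)-\vv(\vx)$ with $\delta\vx=(\delta r\cos\theta,\delta r\sin\theta)$: on one hand Eq.~(\ref{eq:inf}) gives $\delta\vv = f_\theta(\vv(\vx))\,\delta r + o(\delta r)$; on the other hand, treating $\vv$ as a function of $\vx$ and applying the multivariable chain rule gives $\delta\vv = \frac{\partial\vv}{\partial\evx_1}\delta r\cos\theta + \frac{\partial\vv}{\partial\evx_2}\delta r\sin\theta + o(\delta r)$. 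Matching the coefficients of $\delta r$ and specializing to $\theta=0$ and $\theta=\pi/2$ would identify $f_0(\vv(\vx))=\partial\vv/\partial\evx_1$ and $f_{\pi/2}(\vv(\vx))=\partial\vv/\partial\evx_2$, thereby establishing the decomposition $f_\theta(\vv(\vx)) = f_0(\vv(\vx))\cos\theta + f_{\pi/2}(\vv(\vx))\sin\theta$, which is the second displayed identity.

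The norm statement would then follow at once: since $\|\delta\vx\| = |\delta r|$ and $\|f_\theta(\vv(\vx))\| = s$ by the isotropic scaling Condition~\ref{cond:2}, taking norms in $\delta\vv = f_\theta(\vv(\vx))\delta r + o(\delta r)$ yields $\|\delta\vv\| = s\,|\delta r| + o(\delta r) = s\,\|\delta\vx\|$ to leading order.

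To obtain orthogonality and equal norm of the two basis vectors, I would substitute the decomposition into $\|f_\theta(\vv(\vx))\|^2$ to get
\[
\|f_\theta(\vv(\vx))\|^2 = \|f_0\|^2\cos^2\theta + 2\langle f_0, f_{\pi/2}\rangle\cos\theta\sin\theta + \|f_{\pi/2}\|^2\sin^2\theta,
\]
and then impose that this equals the constant $s^2$ for every $\theta$, as demanded by Condition~\ref{cond:2}. Evaluating at $\theta=0$ and $\theta=\pi/2$ forces $\|f_0\| = \|f_{\pi/2}\| = s$; feeding this back leaves the residual $\theta$-dependence $\langle f_0, f_{\pi/2}\rangle\sin 2\theta$, whose vanishing for all $\theta$ forces $\langle f_0, f_{\pi/2}\rangle = 0$. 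This gives orthogonality and finishes the argument.

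I expect the only genuinely delicate point to be the first step: justifying that $f_\theta$, which is defined through the derivative of $F$ with respect to $\Delta r$ at $\Delta r = 0$, really does coincide with the directional derivative of $\vv$ along $(\cos\theta,\sin\theta)$ and so inherits the linear-in-direction structure. This hinges on Eq.~(\ref{eq:inf}) together with differentiability of the embedding $\vv$; once the identification is in place, the remainder is just a trigonometric identity evaluated at two angles.
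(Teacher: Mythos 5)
Your second and third steps (taking norms in $\delta \vv = f_\theta(\vv(\vx))\,\delta r + o(\delta r)$, and forcing equal norms and orthogonality from the identity $\|f_0\cos\theta + f_{\pi/2}\sin\theta\|^2 = s^2$ for all $\theta$) match the paper's proof exactly. The gap is in your first step. You obtain the decomposition $f_\theta(\vv(\vx)) = f_0(\vv(\vx))\cos\theta + f_{\pi/2}(\vv(\vx))\sin\theta$ by ``applying the multivariable chain rule,'' i.e.\ by assuming that $\vv$ is totally differentiable at $\vx$, so that $\delta\vv = \tfrac{\partial \vv}{\partial \evx_1}\,\delta r\cos\theta + \tfrac{\partial \vv}{\partial \evx_2}\,\delta r\sin\theta + o(\delta r)$. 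But total differentiability is not among the hypotheses: Eq.~(\ref{eq:inf}) provides only, for each fixed $\theta$, an expansion in $\delta r$, i.e.\ the existence of directional derivatives. Existence of directional derivatives in every direction does not imply they are linear in the direction --- for the standard counterexample $g(\evx_1,\evx_2) = \evx_1^2 \evx_2/(\evx_1^2+\evx_2^2)$, $g(0,0)=0$, the directional derivative at the origin is $\cos^2\theta\sin\theta$, which is not of the form $a\cos\theta + b\sin\theta$. So the ``delicate point'' you flag at the end is not a technicality to be smoothed over; it is precisely the content of the decomposition claim, and assuming differentiability of the embedding begs the question. A telling symptom: your argument never invokes Condition~\ref{cond:1}, even though it is a hypothesis of the theorem.

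The paper closes exactly this gap using the group representation condition. It splits the motion $\delta\vx$ into two axis-aligned steps: first move by $\delta r\cos\theta$ along direction $0$, then by $\delta r\sin\theta$ along direction $\pi/2$. Condition~\ref{cond:1} guarantees that composing the two transformations lands on $\vv(\vx+\delta\vx)$; applying Eq.~(\ref{eq:inf}) to each step and Taylor-expanding $f_{\pi/2}$ around $\vv(\vx)$ (assuming $f'_{\pi/2}$ exists) gives
\begin{align}
\vv(\vx+\delta\vx) = \vv(\vx) + f_0(\vv(\vx))\,\delta r\cos\theta + f_{\pi/2}(\vv(\vx))\,\delta r\sin\theta + o(\delta r),
\end{align}
and comparison with Eq.~(\ref{eq:inf}) yields the linear-in-direction structure that you tried to get from the chain rule. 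To repair your proposal you would need either to derive total differentiability of $\vv$ from Condition~\ref{cond:1} (which essentially reproduces the paper's two-step argument) or to add it as an explicit extra hypothesis, which would weaken the theorem and make Condition~\ref{cond:1} superfluous for this part of the conclusion.
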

See Supplementary for a proof and Fig. \ref{fig:g}(b) for an illustration. Theorem \ref{theorem:1} indicates that the local 2D polar system around self-position $\vx$ in the 2D physical space is embedded conformally as a 2D polar system around vector $\vv(\vx)$ in the $d$-dimensional neural space, with a scaling factor $s$ (our analysis is local for any fixed $\vx$, and $s$ may depend on $\vx$). Conformal embedding is a generalization of isometric embedding, where the metric can be changed by a scaling factor $s$. If $s$ is globally constant for all $\vx$, then the intrinsic geometry of the codebook manifold $(\vv(\vx), \forall \vx)$ remains Euclidean, i.e., flat. 
 
{\bf Why isotropic scaling and conformal embedding?} The neurons are intrinsically noisy. During path integration, the errors may accumulate in $\vv$. Moreover, when inferring self-position from visual image, it is possible that $\vv$ is inferred first with error, and then $\vx$ is decoded from the inferred $\vv$. Due to  isotropic scaling and conformal embedding, locally we have $\|\delta \vv\| = s \|\delta \vx\|$, which guarantees that the $\ell_2$ error in $\vv$ translates proportionally to the $\ell_2$ error in $\vx$, so that there will not be adversarial perturbations in $\vv(\vx)$ that cause excessively big errors in $\vx$.  Specifically, we have the following theorem. 

\begin{theorem} \label{theorem:a}
Assume the general transformation model (Eq. (\ref{eq:PI2})) and the isotropic scaling condition. For any fixed $\vx$, let $s =   \|f_\theta(\vv(\vx))\|$, which is independent of $\theta$. Suppose the neurons are noisy: $\vv = \vv(\vx) + {\bf \epsilon}$, where ${\bf \epsilon} \sim \mathcal{N}(0, \tau^2 \mI_d)$ and $d$ is the dimensionality of $\vv$. Suppose the agent infers its 2D position $\hat{\vx}$ from $\vv$ by $\hat{\vx} = \arg\min_{\vx'}\|\vv - \vv(\vx')\|^2$, i.e., $\vv(\hat{\vx})$ is the projection of $\vv$ onto the 2D manifold formed by $(\vv(\vx'), \forall \vx')$. Then we have 
\begin{align}
\E\|\hat{\vx} - \vx\|^2 = 2 \tau^2/s^2.
\end{align}
\end{theorem}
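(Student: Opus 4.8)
The plan is to reduce the nonlinear projection onto the codebook manifold to a linear projection onto its tangent plane at $\vv(\vx)$, exploiting the local conformal structure established in Theorem~\ref{theorem:1}. First I would set up an orthonormal frame for that tangent plane. By Theorem~\ref{theorem:1} the vectors $f_0(\vv(\vx))$ and $f_{\pi/2}(\vv(\vx))$ are orthogonal and both have norm $s$, so $\vu_1 = f_0(\vv(\vx))/s$ and $\vu_2 = f_{\pi/2}(\vv(\vx))/s$ form an orthonormal basis of the tangent plane $T$ to the manifold $(\vv(\vx'),\forall \vx')$ at $\vv(\vx)$. Theorem~\ref{theorem:1} also supplies the linearized parametrization $\vv(\vx + \delta\vx) - \vv(\vx) = s\,\delta \evx_1\, \vu_1 + s\, \delta \evx_2\, \vu_2 + o(\|\delta \vx\|)$, so that a physical displacement $\delta\vx$ corresponds, in the $(\vu_1,\vu_2)$ coordinates, to the neural displacement $s\,\delta\vx$ within $T$.

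Next I would compute the projection of $\vv = \vv(\vx) + \epsilon$ onto the manifold. To leading order the manifold coincides with $T$ near $\vv(\vx)$, so the projection is the foot of the perpendicular from $\vv$ onto $T$, whose coordinates in the frame $(\vu_1, \vu_2)$ are $(\epsilon^\top \vu_1,\, \epsilon^\top \vu_2)$. Matching this to the parametrization above, the inferred displacement satisfies $s(\hat\evx_1 - \evx_1) = \epsilon^\top \vu_1$ and $s(\hat\evx_2 - \evx_2) = \epsilon^\top \vu_2$, that is $\hat{\vx} - \vx = (\epsilon^\top\vu_1,\, \epsilon^\top\vu_2)/s$. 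The factor $1/s$ is precisely the conformal contraction that converts a neural-space error back into a physical-space error.

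Finally I would take the expectation. Since $\epsilon \sim \mathcal{N}(0, \tau^2 \mI_d)$ and $\vu_1, \vu_2$ are fixed orthonormal vectors, the scalars $\epsilon^\top \vu_1$ and $\epsilon^\top \vu_2$ are independent $\mathcal{N}(0, \tau^2)$ variables, whence $\E[(\epsilon^\top \vu_1)^2] = \E[(\epsilon^\top \vu_2)^2] = \tau^2$. Therefore
\begin{align}
\E\|\hat{\vx} - \vx\|^2 = \frac{\E[(\epsilon^\top\vu_1)^2] + \E[(\epsilon^\top\vu_2)^2]}{s^2} = \frac{2\tau^2}{s^2},
\end{align}
as claimed. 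The isotropy of $s$ (Condition~\ref{cond:2}) is exactly what makes the two tangent directions contribute equally, yielding the clean factor of $2$.

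The main obstacle is justifying the linearization, namely replacing projection onto the curved manifold by projection onto its tangent plane $T$. I would argue this is valid to leading order in the noise level $\tau$: the extrinsic curvature contributes corrections of order $\|\epsilon\|^2$ to the location of the projection within $T$, which are higher order and do not affect the leading $\tau^2$ term. A secondary subtlety is ensuring the projection remains in the local chart where the first-order expansion of Theorem~\ref{theorem:1} is accurate, which holds with probability tending to one as $\tau \to 0$; accordingly the stated equality is best read as the leading-order small-noise behavior.
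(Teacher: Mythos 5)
Your proposal is correct and follows essentially the same route as the paper's proof: both reduce the projection onto the codebook manifold to a projection onto the local 2D tangent plane at $\vv(\vx)$, use the conformal structure from Theorem~\ref{theorem:1} to identify an orthonormal frame (the paper's abstract $(\vv_1,\vv_2)$ versus your explicit $f_0(\vv(\vx))/s$, $f_{\pi/2}(\vv(\vx))/s$), and compute the expectation of the two independent $\mathcal{N}(0,\tau^2)$ noise components scaled by $1/s^2$. Your explicit treatment of the linearization error (curvature corrections being higher order in $\tau$) is a point the paper leaves implicit, but it does not change the substance of the argument.
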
 
See Supplementary for a proof. 

{\bf Connection to continuous attractor neural network (CANN) defined on 2D torus}. The group representation condition and the isotropic scaling condition appear to be satisfied by the CANN models 
\citep{amit1992modeling,burak2009accurate,Couey2013,Pastoll2013,Agmon2020} that are typically hand-designed on a 2D torus. See Supplementary for details.

 \section{Linear transformation} 
 
 After studying the general transformation, we now investigate the linear transformation of $\vv(\vx)$, for the following reasons. (1) It is the simplest transformation for which we can derive  explicit algebraic and geometric results. (2) It enables us to connect the isotropic scaling condition to a special class of hexagon grid patterns. (3) In Section \ref{sect:i}, we integrate it with the basis expansion model, which is also linear in $\vv(\vx)$, via unitary group representation theory. 
 
 For finite (non-infinitesimal) self-motion, the linear transformation model is: 
 \begin{eqnarray} \label{eqn:linear}
 	\vv(\vx + \Delta \vx)  = F(\vv(\vx), \Delta \vx) = \mM(\Delta \vx) \vv(\vx),
 \end{eqnarray}
where $\mM(\Delta \vx)$ is a matrix. The group representation condition becomes $\mM(\Delta \vx_1 + \Delta \vx_2)  \vv(\vx) = \mM(\Delta \vx_2) \mM(\Delta \vx_1) \vv(\vx)$, i.e., $\mM(\Delta \vx)$ is a matrix representation of self-motion $\Delta \vx$, and $\mM(\Delta \vx)$ acts on the coding manifold $(\vv(\vx), \forall \vx)$). For egocentric parametrization of self-motion $(\Delta r, \theta)$, we can further write $\mM(\Delta \vx) = \mM_\theta(\Delta r)$ for $\Delta \vx = (\Delta r \cos \theta, \Delta r \sin \theta)$, and the linear model becomes $\vv(\vx + \Delta \vx)  = F(\vv(\vx), \Delta r, \theta) = \mM_\theta(\Delta r) \vv(\vx)$.
  
 \subsection{Algebraic structure: matrix Lie algebra and Lie group} \label{sect:algebraic}
For the linear model (Eq. (\ref{eqn:linear})), the directional derivative is: $f_\theta(\vv(\vx)) = F'(\vv(\vx), 0, \theta) = \mM'_\theta(0) \vv(\vx) = \mB(\theta) \vv(\vx)$, where $\mB(\theta) = \mM'_\theta(0)$, which is the derivative of $\mM_\theta(\Delta r)$ with respect to $\Delta r$ at $0$.
For infinitesimal self-motion, the transformation model in Eq. (\ref{eq:inf}) becomes
\begin{eqnarray}
   \vv(\vx+\delta \vx) =  (\mI + \mB(\theta) \delta r) \vv(\vx) + o(\delta r), \label{eq:a1}
\end{eqnarray}
 where $\mI$ is the identity matrix. It can be considered a linear recurrent network where $\mB(\theta)$ is the learnable weight matrix. We have the following theorem for the algebraic structure of the linear transformation.
 
  \begin{theorem} \label{theorem:2}
 Assume the linear transformation model so that for infinitesimal self-motion $(\delta r, \theta)$, the model is in the form of Eq. (\ref{eq:a1}), then for finite displacement $\Delta r$, 
 \begin{equation} \label{eq:linear2}
\vv(\vx + \Delta \vx) = \mM_\theta(\Delta r) \vv(\vx) = \exp(\mB(\theta) \Delta r) \vv(\vx).  
\end{equation}
 \end{theorem}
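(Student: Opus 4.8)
The plan is to recognize Eq. (\ref{eq:linear2}) as the unique solution of a linear matrix ODE obtained by combining the group representation condition with the infinitesimal form (Eq. (\ref{eq:a1})). First I would fix the direction $\theta$ and restrict attention to motion along the straight ray in that direction. Specializing the group representation condition (Eq. (\ref{eq:gr})) to two collinear displacements $\Delta r_1$ and $\Delta r_2$ along $\theta$, and using that in the linear model $\mM_\theta(\Delta r)$ does not depend on $\vx$ (Eq. (\ref{eqn:linear})), gives the one-parameter semigroup identity
\begin{equation*}
\mM_\theta(\Delta r_1 + \Delta r_2) = \mM_\theta(\Delta r_2)\,\mM_\theta(\Delta r_1), \qquad \mM_\theta(0) = \mI,
\end{equation*}
where $\mM_\theta(0) = \mI$ follows from Eq. (\ref{eq:attract}).

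Next I would differentiate this identity to produce an ODE. Setting $\Delta r_1 = \Delta r$ and $\Delta r_2 = h$ in the semigroup law and forming the difference quotient gives
\begin{equation*}
\frac{\mM_\theta(\Delta r + h) - \mM_\theta(\Delta r)}{h} = \frac{\mM_\theta(h) - \mI}{h}\,\mM_\theta(\Delta r).
\end{equation*}
Letting $h \to 0$, the infinitesimal form (Eq. (\ref{eq:a1})) tells us $\mM_\theta(h) = \mI + \mB(\theta) h + o(h)$, so the left factor converges to $\mB(\theta)$, yielding the autonomous linear matrix ODE
\begin{equation*}
\frac{d}{d \Delta r}\mM_\theta(\Delta r) = \mB(\theta)\,\mM_\theta(\Delta r), \qquad \mM_\theta(0) = \mI.
\end{equation*}
The crucial feature, which is exactly what the group structure buys us, is that the generator on the right is the same constant matrix $\mB(\theta)$ at every $\Delta r$ (homogeneity of the flow), making the coefficient independent of $\Delta r$.

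Finally I would solve this initial value problem. The matrix exponential $\exp(\mB(\theta)\Delta r)$ satisfies both the ODE and the initial condition $\mM_\theta(0) = \mI$, and by the standard existence–uniqueness theorem for linear ODEs the solution is unique, giving $\mM_\theta(\Delta r) = \exp(\mB(\theta)\Delta r)$ and hence Eq. (\ref{eq:linear2}).

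The main obstacle is the differentiability/regularity step: the difference-quotient argument presumes that $\mM_\theta$ is differentiable everywhere, whereas Eq. (\ref{eq:a1}) only supplies differentiability at $\Delta r = 0$. The group identity bootstraps this, since differentiability at the origin together with the semigroup law propagates smoothness along the whole ray; to be rigorous I would either invoke this bootstrapping explicitly, or sidestep it by defining $\mN(\Delta r) \coloneqq \exp(\mB(\theta)\Delta r)$, checking that $\mN$ satisfies the same initial condition and the same first-order relation (Eq. (\ref{eq:a1})) as $\mM_\theta$, and appealing to uniqueness of the solution to conclude $\mM_\theta \equiv \mN$.
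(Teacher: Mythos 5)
Your proof is correct, but it takes a genuinely different route from the paper's. The paper's proof is the Euler product (compound-interest) limit: it divides $\Delta r$ into $N$ equal steps and writes $\vv(\vx+\Delta \vx) = (\mI + \mB(\theta)(\Delta r/N) + o(1/N))^N \vv(\vx) \rightarrow \exp(\mB(\theta)\Delta r)\vv(\vx)$ as $N \to \infty$, invoking the series definition of the matrix exponential. You instead extract the one-parameter semigroup law from the group representation condition, differentiate it to obtain the autonomous linear ODE $\mM_\theta'(\Delta r) = \mB(\theta)\mM_\theta(\Delta r)$ with $\mM_\theta(0) = \mI$, and conclude by existence--uniqueness. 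Both arguments rest on the same two ingredients (the group law to propagate the infinitesimal form along the ray, and position-independence of $\mB(\theta)$), but they trade rigor concerns differently: the paper's limit argument never needs differentiability of $\mM_\theta$ away from the origin, yet it silently assumes the $o(\cdot)$ error terms are uniform across the $N$ steps so that they do not accumulate; your ODE argument is easier to make fully rigorous, and you correctly identify and repair its one weak point (regularity away from $0$) via semigroup bootstrapping or a uniqueness comparison. One small caveat: as the paper notes, in the linear model the group condition is an identity of matrices acting on the coding manifold, $\mM(\Delta \vx_1 + \Delta \vx_2)\vv(\vx) = \mM(\Delta \vx_2)\mM(\Delta \vx_1)\vv(\vx)$, not necessarily a full matrix identity unless $(\vv(\vx), \forall \vx)$ spans $\R^d$; your argument survives this unchanged if you run it on the vector curve $g(\Delta r) = \vv(\vx + \Delta r (\cos\theta, \sin\theta))$, which by Eq. (\ref{eq:a1}) applied at each point along the ray satisfies $g'(\Delta r) = \mB(\theta) g(\Delta r)$, $g(0) = \vv(\vx)$, and hence $g(\Delta r) = \exp(\mB(\theta)\Delta r)\vv(\vx)$ by the same uniqueness theorem.
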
 
\begin{proof} 
	We can divide $\Delta r$ into $N$ steps, so that $\delta r = \Delta r/N \rightarrow 0$ as $N \rightarrow \infty$, and 
\begin{align} \label{eq:1} 
 \vv(\vx+\Delta \vx) &= (\mI + \mB(\theta) (\Delta r/N) + o(1/N))^N  \vv(\vx)
 \rightarrow  \exp(\mB(\theta) \Delta r) \vv(\vx)
\end{align}
as $N \rightarrow \infty$. The matrix exponential map is defined by $\exp(A) = \sum_{n=0}^{\infty} A^{n}/n!$.  
\end{proof}
The above math underlies the relationship between matrix Lie algebra and matrix Lie group in general~\citep{taylor2002lectures}. For a fixed $\theta$, the set of $\mM_\theta(\Delta r) = \exp(\mB(\theta) \Delta r)$ for $\Delta r \in \R$ forms a {\em matrix Lie group}, which is both a group and a manifold. The tangent space of $\mM_\theta(\Delta r)$ at identity $\mI$ is called {\em matrix Lie algebra}. $\mB(\theta)$ is the basis of this tangent space, and is often referred to as the {\em generator matrix}.

{\bf Path integration}.  If the agent starts from $\vx_0$, and make a sequence of moves $((\Delta r_t, \theta_t),  t = 1, ..., T)$, then the vector representation of self-position is updated by 
\begin{eqnarray} 
    \vv_{t} = \exp(\mB(\theta_t) \Delta r_t) \vv_{t-1}, \label{eq:PI}
\end{eqnarray} 
where $\vv_0 = \vv(\vx_0)$, and $t = 1, ..., T$. 

{\bf Approximation to exponential map}. For a finite but small $\Delta r$,  $\exp(\mB(\theta) \Delta r)$ can be approximated by a second-order (or higher-order) Taylor expansion
\begin{eqnarray}
\exp(\mB(\theta)\Delta r) = \mI + \mB(\theta)\Delta r + \mB(\theta)^2 \Delta r^2/2 + o(\Delta r^2). \label{eq:second}
\end{eqnarray}

\subsection{Geometric structure: rotation, periodicity, metic and error correction} \label{sect:r}

If we assume $\mB(\theta) = - \mB(\theta)^{\top}$, i.e., skew-symmetric, then $\mI + \mB(\theta) \delta r$ in Eq. (\ref{eq:a1}) is  a rotation matrix operating on $\vv(\vx)$, due to the fact that $(\mI + \mB(\theta) \delta r)(\mI + \mB(\theta) \delta r)^{\top} = \mI + O(\delta r^2)$. For finite $\Delta r$, $\exp(\mB(\theta) \Delta r)$ is also  a rotation matrix, as it equals to the product of $N$ matrices $\mI + \mB(\theta) (\Delta r/N)$ (Eq. (\ref{eq:1})). 
The geometric interpretation is that, if the agent moves along the direction $\theta$ in the physical space, the vector $\vv(\vx)$ is rotated by the matrix $\mB(\theta)$ in the neural space, while the $\ell_2$ norm $\|\vv(\vx)\|^2$ remains fixed.  We may interpret $\|\vv(\vx)\|^2 = \sum_{i=1}^{d} v_i(\vx)^2$ as the total energy of grid cells. 
See Fig. \ref{fig:g}(b).

The angle of rotation is given by $\|\mB(\theta)  \vv(\vx)\|\delta r / \|\vv(\vx)\|$, because $\|\mB(\theta)  \vv(\vx)\|\delta r$ is the arc length and $\|\vv(\vx)\|$ is the radius. If we further assume the isotropic scaling condition, which becomes that $\|f_\theta(\vv(\vx))\| = \|\mB(\theta) \vv(\vx)\|$ is constant over $\theta$ for the linear model, then the angle of rotation can be written as $\mu \delta r$, where  $\mu =   \|\mB(\theta) \vv(\vx)\|/\|\vv(\vx)\|$ is independent of $\theta$.
 Geometrically, $\mu$ tells us how fast the vector rotates in the neural space as the agent moves in the physical space. In practice, $\mu$ can be much bigger than 1 for the learned model, thus the vector can rotate back to itself in a short distance, causing the periodic patterns in the elements of $\vv(\vx)$. $\mu$ captures the notion of metric. 
 
For $\mu \gg 1$, the conformal embedding  in Fig. \ref{fig:g} (b) {\bf magnifies} the local motion in Fig. \ref{fig:g} (a), and this enables error correction~\citep{sreenivasan2011grid}. More specifically, we have the following result, which is based on Theorem \ref{theorem:a}.  

\begin{prop} \label{prop:a}
Assume the linear transformation model (Eq. (\ref{eq:a1})) and the isotropic scaling condition \ref{cond:2}. For any fixed $\vx$, let $\mu =   \|\mB(\theta) \vv(\vx)\|/\|\vv(\vx)\|$. Suppose  $\vv = \vv(\vx) + {\bf \epsilon}$, where ${\bf \epsilon} \sim \mathcal{N}(0, \tau^2 \mI_d)$ and $\tau^2 = \alpha^2 (\|\vv(\vx)\|^2/d)$, so that $\alpha^2$ measures the variance of noise relative to the average magnitude of $(v_i(\vx)^2, i = 1, ..., d)$. Suppose the agent infers its 2D position $\hat{\vx}$ from $\vv$ by $\hat{\vx} = \arg\min_{\vx'}\|\vv - \vv(\vx')\|^2$. Then we have 
\begin{align}
\E\|\hat{\vx} - \vx\|^2 = 2 \alpha^2/(\mu^2 d).
\end{align}
\end{prop}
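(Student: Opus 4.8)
The plan is to reduce Proposition \ref{prop:a} directly to Theorem \ref{theorem:a}, which has already done the essential geometric work of relating the noise variance $\tau^2$ and the scaling factor $s$ to the position-inference error. Theorem \ref{theorem:a} gives $\E\|\hat{\vx}-\vx\|^2 = 2\tau^2/s^2$ under the isotropic scaling condition, where $s = \|f_\theta(\vv(\vx))\|$. The only remaining task is to specialize the general quantities $s$ and $\tau^2$ to the linear model and to the specific noise scaling assumed in the proposition, then simplify.

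First I would identify $s$ for the linear transformation. From Section \ref{sect:algebraic}, the directional derivative of the linear model is $f_\theta(\vv(\vx)) = \mB(\theta)\vv(\vx)$, so under the isotropic scaling condition we have $s = \|f_\theta(\vv(\vx))\| = \|\mB(\theta)\vv(\vx)\|$. By the definition of $\mu$ in the proposition, $\mu = \|\mB(\theta)\vv(\vx)\|/\|\vv(\vx)\|$, hence $s = \mu\|\vv(\vx)\|$, and therefore $s^2 = \mu^2\|\vv(\vx)\|^2$. This is just a restatement of the metric $\mu$ introduced in Section \ref{sect:r} in terms of the general scaling factor.

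Next I would substitute the noise model. The proposition assumes $\tau^2 = \alpha^2(\|\vv(\vx)\|^2/d)$. Plugging both $\tau^2$ and $s^2$ into the conclusion of Theorem \ref{theorem:a} gives
\begin{align}
\E\|\hat{\vx}-\vx\|^2 = \frac{2\tau^2}{s^2} = \frac{2\alpha^2 \|\vv(\vx)\|^2/d}{\mu^2 \|\vv(\vx)\|^2} = \frac{2\alpha^2}{\mu^2 d},
\end{align}
where the factor $\|\vv(\vx)\|^2$ cancels. This is exactly the claimed identity, and it shows transparently why the error decreases as $\mu^2 d$ grows: a large metric $\mu$ magnifies local motion (error correction) and a high dimension $d$ averages out the relative noise.

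I do not expect a genuine obstacle here, since all the analytic content lives in Theorem \ref{theorem:a}; the proposition is essentially a corollary obtained by substitution. The one point requiring a line of care is verifying that the hypotheses of Theorem \ref{theorem:a} are actually met by the linear model, namely that the isotropic scaling condition holds so that $s = \|\mB(\theta)\vv(\vx)\|$ is genuinely independent of $\theta$ (this is assumed in the proposition), and that the linear model is a special case of the general transformation model (Eq. (\ref{eq:PI2})) to which Theorem \ref{theorem:a} applies. Once these are noted, the remaining steps are the purely algebraic cancellation above.
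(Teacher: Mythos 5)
Your proposal is correct and follows essentially the same route as the paper: the paper's own proof likewise invokes Theorem \ref{theorem:a}, identifies $s = \|\mB(\theta)\vv(\vx)\| = \mu\|\vv(\vx)\|$ in the linear case, and substitutes $\tau^2 = \alpha^2\|\vv(\vx)\|^2/d$ to obtain $2\alpha^2/(\mu^2 d)$ after cancellation. Your added check that the linear model satisfies the hypotheses of Theorem \ref{theorem:a} is a reasonable point of care but introduces nothing beyond what the paper implicitly assumes.
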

See Supplementary for a proof. By the above proposition, error correction of grid cells is due to two factors: (1) higher dimensionality $d$ of $\vv(\vx)$ for encoding 2D positions $\vx$, and (2) a magnifying $\mu \gg 1$ (our analysis is local for any fixed $\vx$, and $\mu$ may depend on $\vx$). 

\subsection{Hexagon grid patterns formed by mixing Fourier waves} \label{sect:hexagon}

In this subsection, we make connection between the isotropic scaling condition \ref{cond:2} and a special class of hexagon grid patterns created by linearly mixing three Fourier plane waves whose directions are $2\pi/3$ apart. We show such linear mixing satisfies the linear transformation model and the isotropic scaling condition. 

\begin{theorem} \label{theorem:3}
 Let $\ve(\vx) = (\exp(i \langle \va_j,\vx\rangle), j = 1, 2, 3)^\top$, where $(\va_j, j = 1, 2, 3)$ are three 2D vectors of equal norm, and the angle between every pair of them is  $2\pi/3$. Let $\vv(\vx) = \mU \ve(\vx)$, where $\mU$ is an arbitrary unitary matrix. Let $\mB(\theta) = \mU^* \mD(\theta) \mU$,  where $\mD(\theta) = {\rm diag}(i \langle \va_j, \vq(\theta)\rangle, j = 1, 2, 3)$, with $\vq(\theta) = (\cos \theta, \sin \theta)^\top$. Then $(\vv(\vx), \mB(\theta))$ satisfies the linear transformation model (Eq. (\ref{eq:a1})) and the isotropic scaling condition \ref{cond:2}. Moreover, $\mB(\theta)$ is skew-symmetric. 
 \end{theorem}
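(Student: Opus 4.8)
The plan is to check each of the three assertions by direct computation, using repeatedly that $\mU$ is unitary so it can be absorbed or stripped wherever only norms or adjoints appear. First I would confirm the linear transformation model (Eq.~(\ref{eq:a1})) by computing the directional derivative. For $\delta\vx = \delta r\,\vq(\theta)$ one has $\langle\va_j,\vx+\delta r\,\vq(\theta)\rangle = \langle\va_j,\vx\rangle + \delta r\,\langle\va_j,\vq(\theta)\rangle$, so the $j$-th entry of $\ve(\vx+\delta\vx)$ is $\exp(i\langle\va_j,\vx\rangle)\,\exp(i\delta r\,\langle\va_j,\vq(\theta)\rangle)$. Differentiating at $\delta r=0$ multiplies entry $j$ by $i\langle\va_j,\vq(\theta)\rangle$, i.e. $\frac{d}{d\delta r}\ve(\vx+\delta\vx)\big|_{0} = \mD(\theta)\ve(\vx)$. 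Applying $\mU$ and inserting $\mU^{*}\mU=\mI$ turns this into $f_\theta(\vv(\vx)) = \mU\mD(\theta)\mU^{*}\,\vv(\vx)$, the conjugate of $\mD(\theta)$ by $\mU$, which is the first-order term in Eq.~(\ref{eq:a1}); here I would double-check the orientation of the conjugation ($\mU$ versus $\mU^{*}$) so that $\mB(\theta)\vv(\vx)$ reproduces the derivative exactly.

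Next, for the isotropic scaling condition~\ref{cond:2} I would exploit that $\mU$ preserves the $\ell_2$ norm, giving $\|f_\theta(\vv(\vx))\| = \|\mD(\theta)\ve(\vx)\|$. Because $\mD(\theta)$ is diagonal and each $|\exp(i\langle\va_j,\vx\rangle)|=1$, this yields $\|\mD(\theta)\ve(\vx)\|^{2} = \sum_{j=1}^{3}\langle\va_j,\vq(\theta)\rangle^{2}$, already independent of $\vx$. The crux is the purely geometric claim that this sum is also independent of $\theta$. Writing $\va_j = a(\cos\phi_j,\sin\phi_j)$ with the $\phi_j$ spaced $2\pi/3$ apart gives $\langle\va_j,\vq(\theta)\rangle = a\cos(\phi_j-\theta)$, and $\cos^{2}x = \tfrac12(1+\cos 2x)$ reduces the claim to $\sum_j\cos(2\phi_j-2\theta)=0$. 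Doubling the angles leaves them equally spaced by $2\pi/3$, so $\sum_j e^{2i\phi_j}$ equals $e^{2i\phi_0}$ times the sum of the three cube roots of unity, namely $0$; hence $\|f_\theta(\vv(\vx))\|^{2}=3a^{2}/2$, constant in both $\theta$ and $\vx$. This tight-frame identity for three equal-length vectors at $120^{\circ}$ is the real content of the theorem and the only place the hexagonal hypothesis enters; the rest is bookkeeping.

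Finally, for the skew property I would observe that $\mD(\theta)=\mathrm{diag}(i\langle\va_j,\vq(\theta)\rangle)$ is purely imaginary and diagonal, hence skew-Hermitian: $\mD(\theta)^{*}=-\mD(\theta)$. Conjugation by the unitary $\mU$ preserves this, so $\mB(\theta)^{*}=\mU\mD(\theta)^{*}\mU^{*}=-\mB(\theta)$. The one subtlety worth spelling out is that over $\sC$ the clean statement is skew-Hermitian, not skew-symmetric; the two coincide exactly when the representation is realized by real vectors, e.g. when a real orthogonal $\mU$ maps $\ve(\vx)$ to $\cos$/$\sin$ combinations, in which case $\mB(\theta)^{\top}=-\mB(\theta)$ as stated and $\mI+\mB(\theta)\delta r$ is the infinitesimal rotation used in Section~\ref{sect:r}. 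I expect no genuine obstacle here beyond keeping the $\mU$/$\mU^{*}$ conventions and the complex-conjugation bookkeeping straight; the single substantive step is the trigonometric/frame identity in the previous paragraph.
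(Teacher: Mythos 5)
Your proof is correct and follows essentially the same route as the paper's: strip the unitary $\mU$ by norm-preservation, reduce everything to the diagonal generator $\mD(\theta)$ acting on $\ve(\vx)$, and invoke the tight-frame property of three equal-norm vectors at $2\pi/3$. You are in fact more careful than the paper in two spots --- you prove the tight-frame identity explicitly via the cube-roots-of-unity cancellation (the paper merely asserts it), and you correctly flag both that the derivation forces the orientation $\mB(\theta)=\mU\mD(\theta)\mU^{*}$ (the paper's proof agrees, though its theorem statement writes $\mU^{*}\mD(\theta)\mU$) and that the argument genuinely yields skew-Hermitian $\mB(\theta)^{*}=-\mB(\theta)$, which coincides with the stated skew-symmetry only under a real realization, a distinction the paper glosses over.
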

See Supplementary for a proof. We would like to emphasize that the above theorem analyzes a special case solution to our linear transformation model, but our optimization-based learning method {\bf does not assume any superposition of Fourier basis functions} as in the theorem. Our experimental results are learned purely by optimizing a loss function based on the simple assumptions of our model with generic vectors and matrices. 

We leave it to future work to theoretically prove that the isotropic scaling condition leads to hexagon grid patterns in either the general transformation model or the linear transformation model. The hexagon grid patterns are not limited to superpositions of three plane waves as in the above theorem. 

\subsection{Modules} \label{sect:modules}

Biologically, it is well established that grid cells are organized in discrete modules~\citep{Barry2007experience,stensola2012entorhinal} or blocks. We thus partition the vector $\vv(\vx)$ into $K$ blocks, $\vv(\vx) = (\vv_{k}(\vx), k = 1, ..., K)$. Correspondingly the generator matrices $\mB(\theta) = {\rm diag}(\mB_{k}(\theta), k = 1, ..., K)$  are block diagonal, so that each sub-vector $\vv_{k}(\vx)$ is rotated by a sub-matrix $\mB_{k}(\theta)$. For the general transformation model, each sub-vector is transformed by a separate sub-network. 
By the same argument as in Section \ref{sect:r},  let $\mu_k = \|\mB_{k} \vv_k(\vx)\|/\|\vv_k(\vx)\|$, then $\mu_k$ is the metric of module $k$.

\section{Interaction with place cells} 
\label{sect:i}

\subsection{Place cells} 

For each $\vv(\vx)$, we need to uniquely decode $\vx$ globally. This can be accomplished via interaction with place cells. Specifically, each place cell fires when the agent is at a specific position. Let $A(\vx, \vx')$ be the response map of the place cell associated with position $\vx'$. It measures the adjacency between $\vx$ and $\vx'$. A commonly used form of $A(\vx, \vx')$  is the Gaussian adjacency kernel $A(\vx, \vx') = \exp(-\|\vx-\vx'\|^2/(2\sigma^2))$. The set of Gaussian adjacency kernels serve as inputs to our optimization-based method to learn grid cells.

\subsection{Basis expansion} 
 
 \begin{wrapfigure}{r}{0.38\linewidth} 
\centering
\vspace{-.3cm}
\includegraphics[width=.85\linewidth]{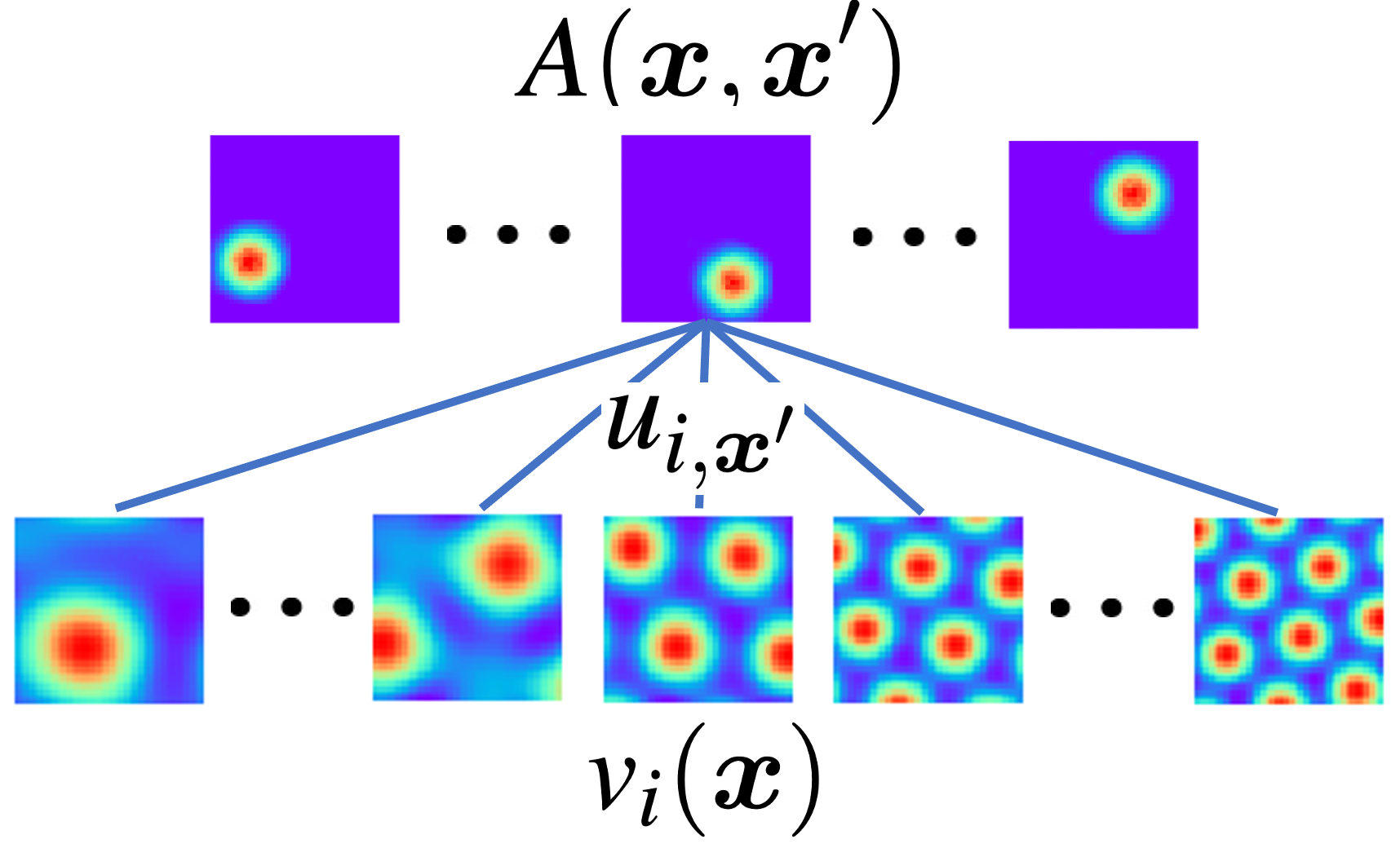}
\caption{\small Illustration of basis expansion model $A(\vx, \vx') = \sum_{i=1}^{d} u_{i,\vx'} v_i(\vx)$, where $v_i(\vx)$ is the response map of $i$-th grid cell, shown at the bottom, which shows 5 different $i$. $A(\vx, \vx')$ is the response map of place cell associated with $\vx'$, shown at the top, which shows 3 different $\vx'$. $u_{i, \vx'}$ is the connection weight.}
\vspace{-1.3cm}
\label{fig:units1}
\end{wrapfigure}  
 A popular model that connects place cells and grid cells is the following basis expansion model (or PCA-based model)~\citep{dordek2016extracting}: 
\begin{align} 
A(\vx, \vx') = \langle \vv(\vx), \vu(\vx')\rangle = \sum_{i=1}^{d} u_{i, \vx'} v_i(\vx), \label{eq:b1}
\end{align} 
where $\vv(\vx) = (v_i(\vx), i = 1, ..., d)^\top$, and $\vu(\vx') = (u_{i, \vx'}, i = 1, ..., d)^\top$. Here $(v_i(\vx), i=1, ..., d)$ forms a set of $d$ basis functions (which are functions of $\vx$) for expanding $A(\vx, \vx')$ (which is a function of $\vx$ for each place $\vx'$), while $\vu(\vx')$ is the read-out weight vector for place cell at $\vx'$ and needs to be learned. See Fig. \ref{fig:units1} for an illustration. 
Experimental results on biological brains have shown that the connections from grid cells to place cells are excitatory \citep{zhang2013optogenetic,Rowland2018}. We thus assume that $u_{i, \vx'} \geq 0$ for all $i$ and $\vx'$. 
\subsection{From group representation to basis functions}

The vector representation $\vv(\vx)$ generated (or constrained) by the linear transformation model (Eq. (\ref{eqn:linear})) can serve as basis functions of the PCA-based basis expansion model (Eq. (\ref{eq:b1})), due to the fundamental theorems of Schur~\citep{zee2016group} and Peter-Weyl~\citep{taylor2002lectures}, which reveal the deep root of Fourier analysis and generalize it to general Lie groups. Specifically, if $\mM(\Delta \vx)$ is an irreducible unitary representation of $\Delta \vx$ that forms a compact Lie group, then the elements $\{M_{ij}(\Delta \vx)\}$ form a set of orthogonal basis functions of $\Delta \vx$.  Let $\vv(\vx) = \mM(\vx) \vv(0)$ (where we choose the origin $0$ as the reference point). The elements of $\vv(\vx)$, i.e., $(v_i(\vx), i = 1, ..., d)$, are linear mixings of the basis functions $\{M_{ij}(\vx)\}$, so that they themselves form a new set of basis functions that serve to expand $(A(\vx, \vx'), \forall \vx')$ that parametrizes the place cells. Thus group representation in our path integration model is a perfect match to the basis expansion model, in the sense that the basis functions are results of group representation. 

The basis expansion model (or PCA-based model) (Eq. (\ref{eq:b1})) assumes that the basis functions are orthogonal, whereas in our work, {\bf we do not make the orthogonality assumption}. Interestingly, the learned transformation model generates basis functions that are close to being orthogonal automatically. See Supplementary for more detailed explanation and experimental results. 

\subsection{Decoding and re-encoding}

For a neural response vector $\vv$, such as $\vv_t$ in Eq. (\ref{eq:PI}), the response of the place cell associated with  location $\vx'$ is $\langle \vv, \vu(\vx')\rangle$. We can decode the position $\hat{\vx}$ by examining which place cell has the maximal response, i.e.,
 \begin{eqnarray}
\hat{\vx} = \arg\max_{\vx'} \langle \vv, \vu(\vx')\rangle. \label{eq:decode}
\end{eqnarray}
After decoding $\hat{\vx}$, we can re-encode $\vv \leftarrow \vv(\hat{\vx})$ for error correction. Decoding and re-encoding can also be done by directly projecting $\vv$ onto the manifold $(\vv(\vx), \forall \vx)$, which gives similar results. See Supplementary for more analysis and experimental results. 

\section{Learning} \label{sect:learning}

We learn the model by optimizing a loss function defined based on three model assumptions discussed above: (1) the basis expansion model (Eq. (\ref{eq:b1})),  (2) the linear transformation model (Eq. (\ref{eq:linear2})) and (3) the isotropic scaling condition \ref{cond:2}. The input is the set of adjacency kernels $A(\vx, \vx'), \forall \vx, \vx'$. The unknown parameters to be learned are (1) $(\vv(\vx) = (\vv_k(\vx), k = 1, ..., K), \forall \vx)$, (2) $(\vu(\vx'), \forall \vx' )$ and (3) ($\mB(\theta), \forall \theta)$. We assume that there are $K$ modules or blocks and $\mB(\theta)$ is skew-symmetric, so that $\mB(\theta)$ are parametrized as block-diagonal matrices $(\mB_k(\theta), k = 1, ..., K), \forall \theta)$ and only the lower triangle parts of the matrices need to be learned. The loss function is defined as a weighted sum of simple $\ell_2$ loss terms constraining the three model assumptions: $L = L_0 + \lambda_1 L_1 + \lambda_2 L_2$, where 
\begin{align} 
L_0 &= \E_{\vx, \vx'}[A(\vx, \vx') - \langle \vv(\vx), \vu(\vx')\rangle]^2, \; (\mbox{basis expansion}) \label{eq:L1}\\
L_1 &= \sum_{k=1}^{K} \E_{\vx, \Delta \vx}  \|\vv_k(\vx+\Delta \vx) - \exp(\mB_k(\theta) \Delta r) \vv_k(\vx)\|^2, \label{eq:L2}\; (\mbox{transformation})\\
L_2 &= \sum_{k=1}^{K} \E_{\vx, \theta, \Delta \theta} [\|\mB_k(\theta + \Delta \theta) \vv_k(\vx)\| - \|\mB_k(\theta) \vv_k(\vx)\| ]^2.  \; (\mbox{isotropic scaling}) \label{eq:L3}
\end{align}
In $L_1$, $\Delta \vx = (\Delta r \cos \theta, \Delta r \sin \theta)$. $\lambda_1$ and $\lambda_2$ are chosen so that the three loss terms are of similar magnitudes. $A(\vx, \vx')$ are given as Gaussian adjacency kernels.  For regularization, we add a penalty on $\|\vu(\vx')\|^2$, and further assume $\vu(\vx') \geq 0$ so that the connections from grid cells to place cells are excitatory \citep{zhang2013optogenetic,Rowland2018}. However, note that $\vu(\vx') \geq 0$ is not necessary for the emergence of hexagon grid patterns as shown in the ablation studies.

Expectations in $L_0$, $L_1$ and $L_2$ are approximated by Monte Carlo samples. $L$ is minimized by {\em Adam}~\cite{kingma2014adam} optimizer. 
See Supplementary for implementation details. 

It is worth noting that, consistent with the experimental observations, we assume individual place field $A(\vx, \vx')$  to exhibit a Gaussian shape, rather than a Mexican-hat pattern (with balanced excitatory center and inhibitory surround) as assumed in previous basis expansion models ~\citep{dordek2016extracting,sorscher2019unified} of grid cells. 

\noindent{\bf ReLU non-linearity}. We also experiment with a non-linear transformation model where a ReLU activation is added. See Supplementary for details.

\section{Experiments} \label{sect:exp}
\begin{figure*}[ht]
\centering
  \includegraphics[width=.99\linewidth]{./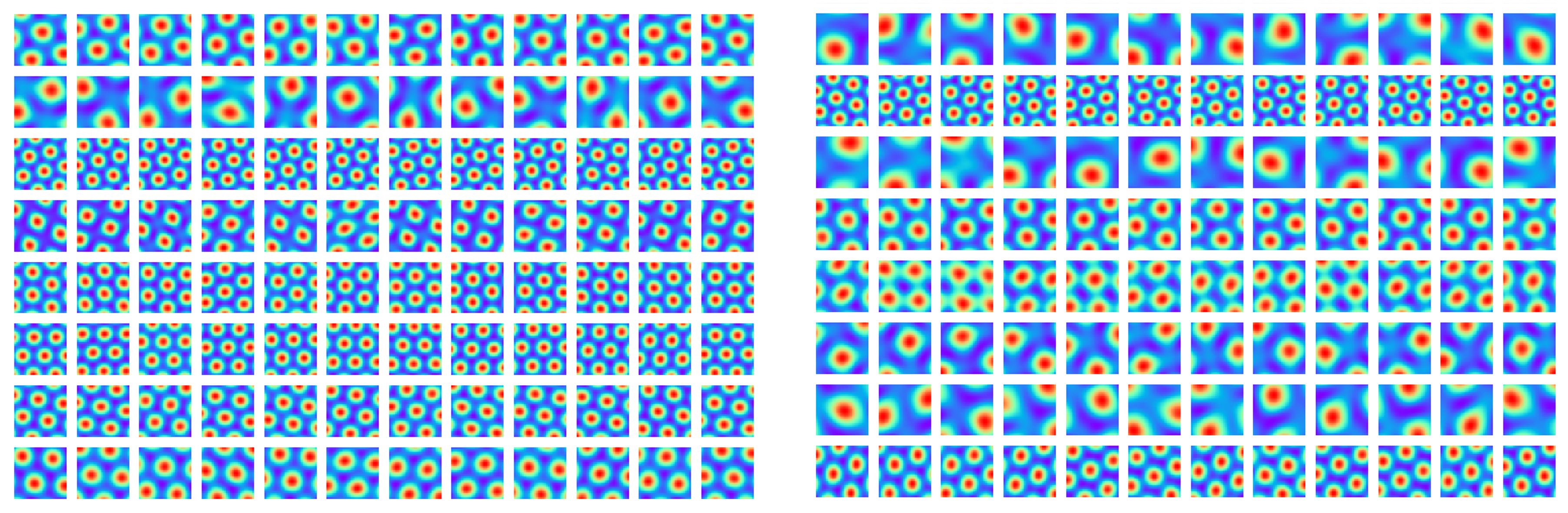} \\
  \caption{\small Hexagonal grid firing patterns emerge in the learned network. Every response map shows the firing pattern of one neuron (i.e, one element of $\vv$) in the 2D environment. Every row shows the firing patterns of the neurons within the same block or module. }
  \label{fig: units}
\end{figure*}

We conduct numerical experiments to learn the representations as described in Section \ref{sect:learning}. Specifically, we use a square environment with size $1$m $\times$ $1$m, which is discretized into a $40 \times 40$ lattice. For direction, we discretize the circle $[0, 2\pi]$ into 144 directions and use nearest neighbor linear interpolations for values in between. We use the second-order Taylor expansion (Eq. (\ref{eq:second})) to approximate the exponential map $\exp(\mB(\theta) \Delta r)$. The displacement $\Delta r$ are sampled within a small range, i.e., $\Delta r$ is smaller than $3$ grids on the lattice. For $A(\vx, \vx')$, we use a Gaussian adjacency kernel with $\sigma = 0.07$. $\vv(\vx)$ is of $d = 192$ dimensions, which is partitioned into $K=16$ modules, each of which has $12$ cells. 

\subsection{Hexagon grid patterns} 
Fig. \ref{fig: units} shows the learned firing patterns of $\vv(\vx) = (v_i(\vx), i = 1, ..., d)$  over the $40 \times 40$ lattice of $\vx$. Every row shows the learned units belonging to the same block or module.  Regular hexagon grid patterns emerge. Within each block or module, the scales and orientations are roughly the same, but with different phases or spatial shifts. For the learned $\mB(\theta)$, each element shows regular sine/cosine tuning over $\theta$. See Supplementary for more learned patterns. 

\begin{center}	\begin{minipage}[c]{.4\textwidth}
\centering
\includegraphics[width=.9\linewidth]{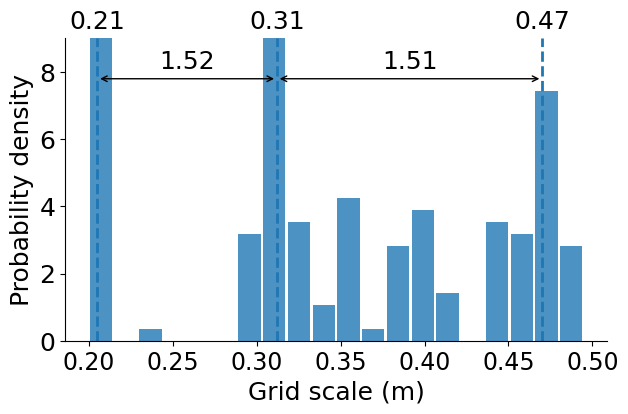}
\captionof{figure}{\small Multi-modal distribution of grid scales of the learned model grid cells. The scale ratios closely match the real data~\citep{stensola2012entorhinal}. }
\label{fig: profile} 
	\end{minipage}
	\hspace{4mm}
	\begin{minipage}[c]{.53\textwidth}
\centering
	\captionof{table}{\small Summary of gridness scores of the patterns learned from different models. To determine valid grid cells, we apply the same threshold of gridness score as in \cite{banino2018vector}, i.e., gridness score > 0.37. For our model, we run 5 trials and report the average and standard deviation.} 
	\footnotesize
 \begin{tabular}{lcc} 
    \toprule
    Model &  Gridness score ($\uparrow$) & \% of grid cells \\
    \midrule
    \cite{banino2018vector} (LSTM) & 0.18 & 25.20 \\
    \cite{sorscher2019unified} (RNN) & 0.48 & 56.10 \\ 
    Ours & {\bf 0.90} $\pm$ 0.044 & {\bf 73.10} $\pm$ 1.33 \\
    \bottomrule 
    \end{tabular}
    \label{tabl:gridness}
	\end{minipage}
\end{center}

We further investigate the characteristics of the learned firing patterns of $\vv(\vx)$ using measures adopted from the literature of grid cells. Specifically, the hexagonal regularity, scale and orientation of grid-like patterns are quantified using the gridness score, grid scale and grid orientation~\citep{langston2010development,sargolini2006conjunctive}, which are determined by taking a circular sample of the autocorrelogram of the response map. Table~\ref{tabl:gridness} summarizes the results of gridness scores and comparisons with other optimization-based approaches~\citep{banino2018vector,sorscher2019unified}. We apply the same threshold to determine whether a learned neuron can be considered a grid cell  as in \cite{banino2018vector} (i.e., gridness score > 0.37). For our model, $73.10\%$ of the learned neurons exhibit significant hexagonal periodicity in terms of the gridness score. Fig. \ref{fig: profile} shows the histogram of grid scales of the learned grid cell neurons (mean $0.33$, range $0.21$ to $0.49$), which follows a multi-modal distribution. The ratio between neighboring modes are roughly $1.52$ and $1.51$, which closely matches the theoretical predictions~\citep{Wei2015,stemmler2015connecting} and also the empirical results from rodent grid cells~\citep{stensola2012entorhinal}. Collectively, these results reveal striking, quantitative correspondence between the properties of our model neurons and those of the grid cells in the brain.

{\bf Connection to continuous attractor neural network (CANN) defined on 2D torus}. The fact that the learned response maps of each module are shifted versions of a common hexagon periodic pattern implies that the learned codebook manifold forms a 2D torus, and as the agent moves, the responses of the grid cells undergo a cyclic permutation. This is consistent with the CANN models hand-crafted on 2D torus. See Supplementary for a detailed discussion. 

{\bf Ablation studies}. We conduct ablation studies to examine whether certain model assumptions are  empirically important for the emergence of hexagon grid patterns. The conclusions are highlighted as follows: (1) The loss term $L_2$ (Eq. (\ref{eq:L3})) constraining the isotropic scaling condition is necessary for learning hexagon grid patterns. (2) The constraint $\vu(\vx') \geq 0$ is not necessary for learning hexagon patterns, but the activations can be either excitatory or inhibitory without the constraint. (3) The skew-symmetric assumption on $\mB(\theta)$ is not important for learning hexagon grid pattern. (4) Hexagon patterns always emerge regardless of the choice of block size and number of blocks. (5) Multiple blocks or modules are necessary for the emergence of hexagon grid patterns of multiple scales. See Fig. \ref{fig:ablation} for several learned patterns and Supplementary for the full studies.   

\begin{figure}[h]
	\centering	
	\begin{tabular}{ccc}
\includegraphics[width=.21\linewidth]{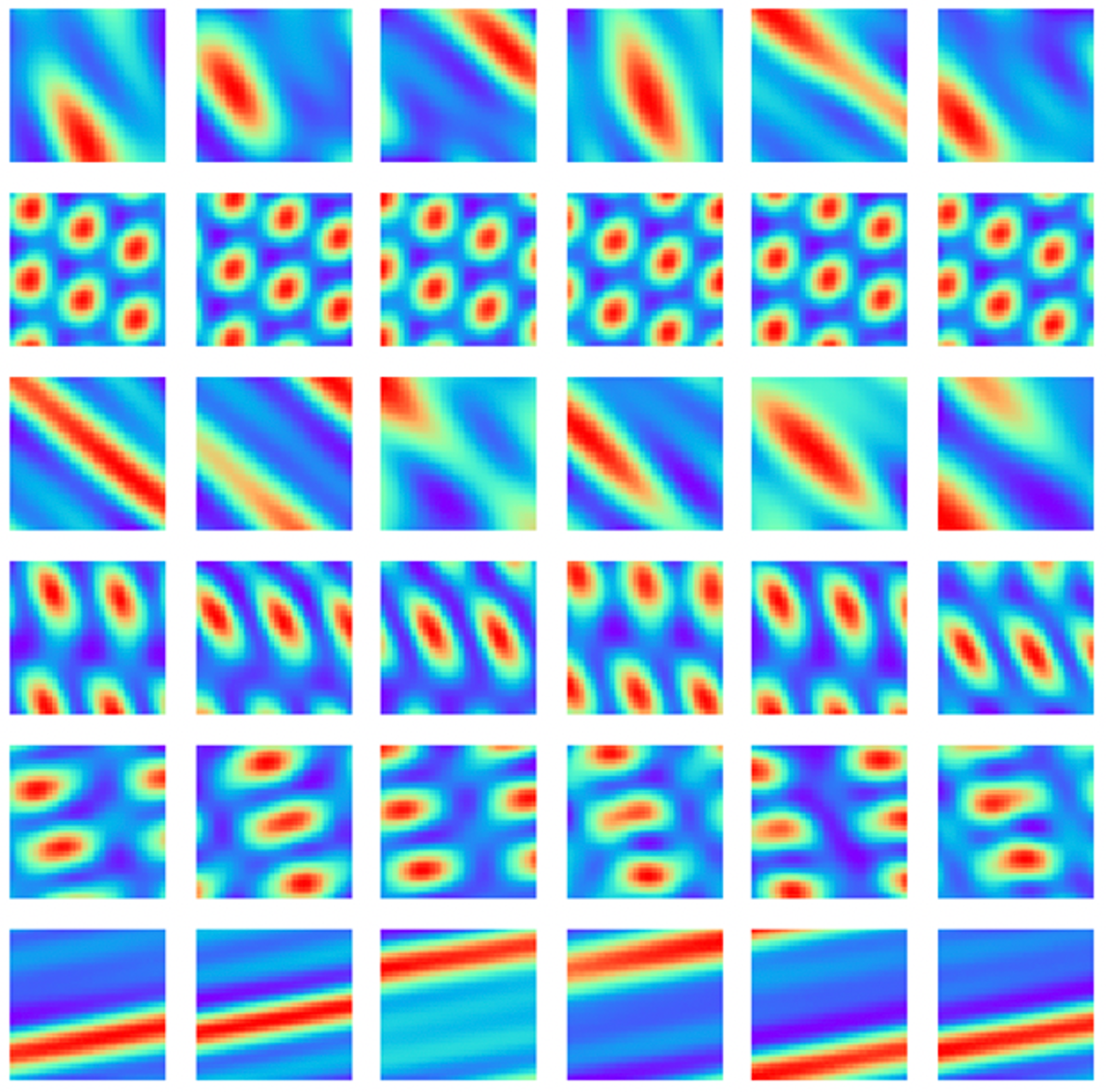}  &
		\includegraphics[width=.21\linewidth]{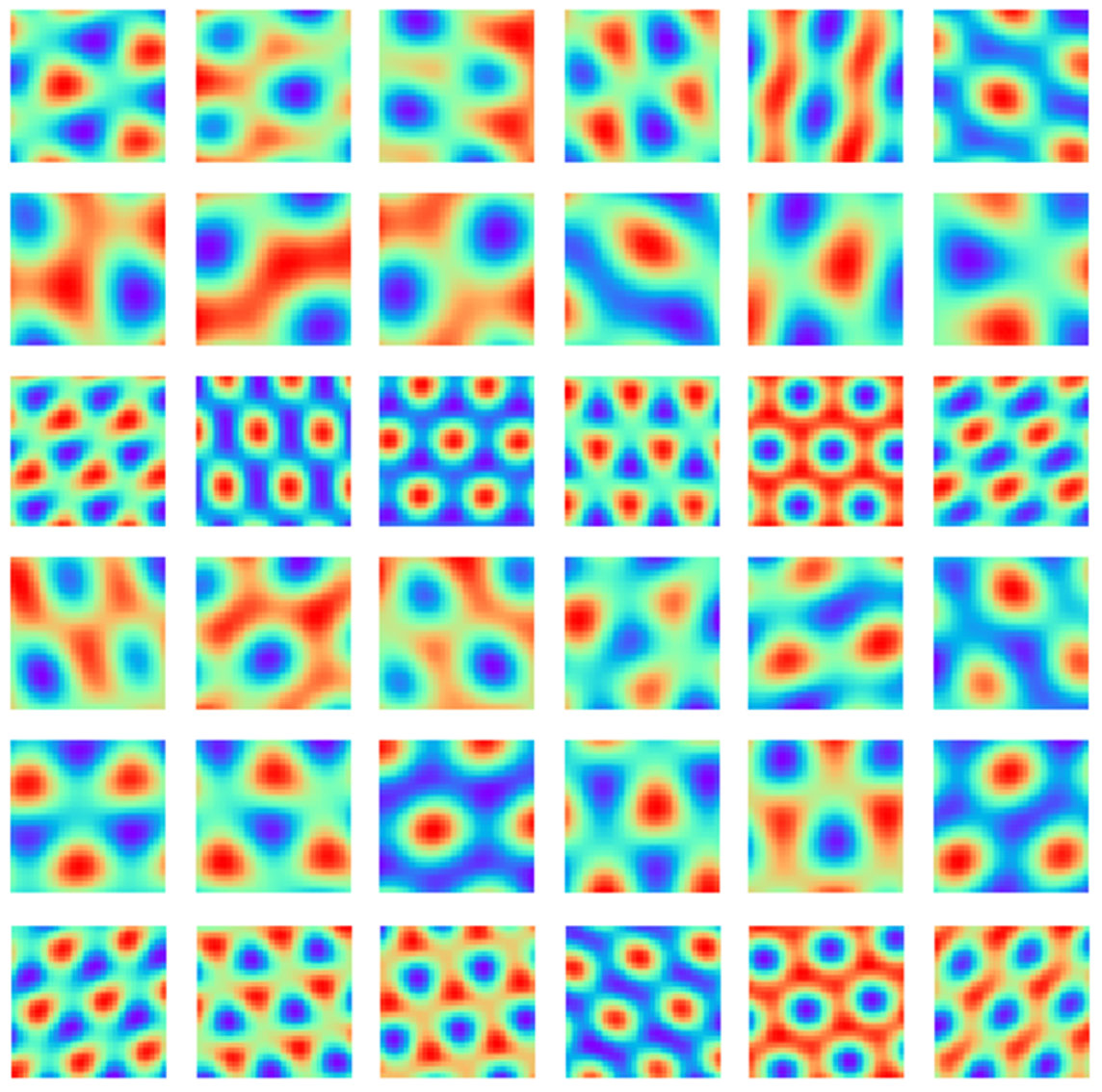} &
		\includegraphics[width=.21\linewidth]{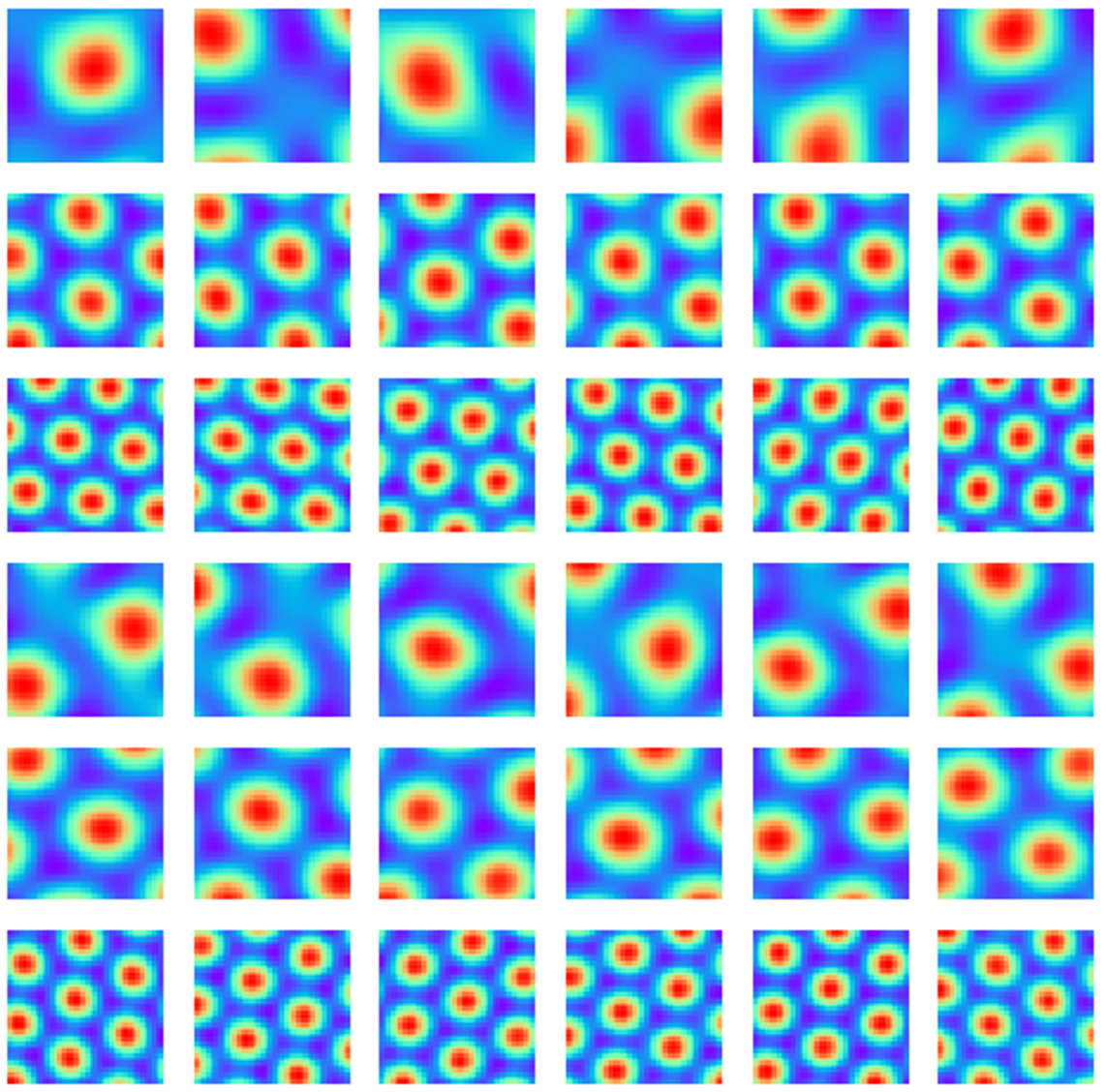}  \\
		{\small (a) without $L_2$} & {\small  (b) without $\vu(\vx') \geq 0$} & {\footnotesize (c) without skew-symmetry}  
							\end{tabular} 						
	\caption{\small Learned response maps in ablation studies where a certain model assumption is removed. (a) Remove the loss term $L_2$. (b) Remove the assumption $\vu(\vx') \geq 0$. (c) Remove the skew-symmetric assumption on $\mB(\theta)$. 
	}   
	\label{fig:ablation}
\end{figure}

\subsection{Path integration}  
We then examine the ability of the learned model on performing multi-step path integration, which can be accomplished by recurrently updating $\vv_t$ (Eq. (\ref{eq:PI})) and decoding $\vv_t$ to $\vx_t$ for $t=1, ..., T$ (Eq. (\ref{eq:decode})). Re-encoding $\vv_t \leftarrow \vv(\vx_t)$ after decoding is adopted. Fig. \ref{fig: path}(a) shows an example trajectory of accurate path integration for number of time steps $T=30$. As shown in Fig. \ref{fig: path}(b), with re-encoding, the path integration error remains close to zero over a duration of $500$ time steps ($ < 0.01$ cm, averaged over 1,000 episodes), even if the model is trained with the single-time-step transformation model (Eq. (\ref{eq:L2})). Without re-encoding, the error goes slight higher but still remains small (ranging from $0.0$ to $4.2$ cm, mean $1.9$ cm in the 1m $\times$ 1m environment). Fig. \ref{fig: path}(c)  summarizes the path integration performance by fixing the number of blocks and altering the block size. The performance of path integration would be improved as the block size becomes larger, i.e., with more neurons in each module. When block size is larger than $16$, path integration is very accurate for the time steps tested. 

{\bf Error correction}. See Supplementary for numerical experiments on error correction, which show that the learn model is still capable of path integration when we apply Gaussian white noise errors or Bernoulli drop-out errors to $\vv_t$. 
\begin{figure}[h] 
\centering
  \includegraphics[width=.99\linewidth]{./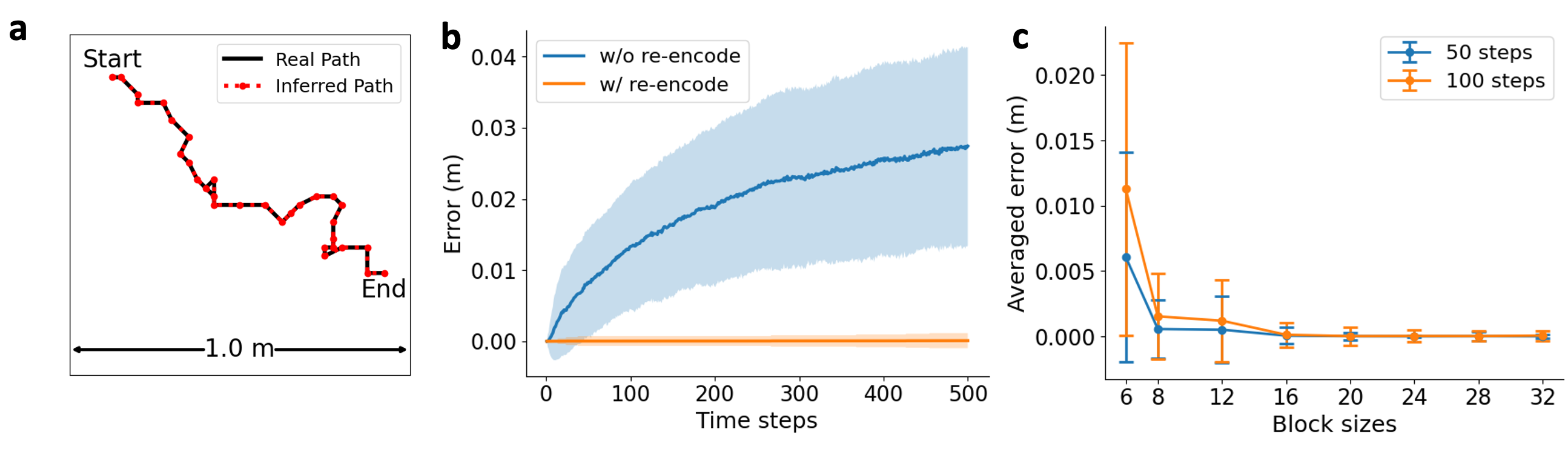}
  \caption{\small The learned model can perform accurate path integration. (a) Black: example trajectory. Red: inferred trajectory. (b) Path integration error over number of time steps, for procedures with re-encoding and without re-encoding. (c) Path integration error with fixed number of blocks and different block sizes, for 50 and 100 time steps. The error band in (b) and error bar in (c) are standard deviations computed over 1,000 episodes.  } 
  \label{fig: path}
\end{figure}

\subsection{Additional experiments on path planning and egocentric vision} 

We also conduct additional experiments on path planning and egocentric vision with our model. Path planning can be accomplished by steepest ascent on the adjacency to the target position. For egocentric vision, we learn an extra generator network that generates the visual image given the position encoding formed by the grid cells. See Supplementary for details.

\section{Related work}

Our work is related to several lines of previous research on modeling grid cells. First, RNN models have been used to model grid cells and path integration. The traditional approach uses
simulation-based models with hand-crafted connectivity, known as continuous attractor neural network (CANN)~\citep{amit1992modeling,burak2009accurate,Couey2013,Pastoll2013,Agmon2020}.
On the other hand, more recently two pioneering papers~\citep{cueva2018emergence,banino2018vector} developed optimization-based RNN approaches to learn the path integration model and discovered that grid-like response patterns can emerge in the optimized networks. These results are further substantiated in~\citep{sorscher2019unified,Cueva2020}.
Our work analyzes the properties of the general recurrent model for path integration, and these properties seem to be satisfied by the hand-crafted CANN models.  Our method belongs to the scheme of optimization-based approaches, and the learned response maps share similar properties as assumed by the CANN models.

Second,  our work differs from the PCA-based basis expansion models~\citep{dordek2016extracting,sorscher2019unified,stachenfeld2017hippocampus} in that, unlike PCA, we make
no assumption about the orthogonality between the basis functions, and the basis functions are generated by the transformation model.  
Furthermore, in previous basis expansion models ~\citep{dordek2016extracting,sorscher2019unified}, place fields with Mexican-hat patterns (with balanced excitatory center and inhibitory surround) had to be assumed in order to obtain hexagonal grid firing patterns. However, experimentally measured place fields in biological brains were instead well characterized by Gaussian functions. Crucially, in our model, hexagonal grids emerge from learning with Gaussian place fields, and there is no need to assume any additional surround mechanisms or difference of Gaussians kernels. 

In another related paper, \citep{gao2018learning} proposed matrix representation of 2D self-motion, while our work analyzes general transformations. Our investigation of the special case of linear transformation model reveals the matrix Lie group and the matrix Lie algebra of rotation group. Our work also connects the linear transformation model to the basis expansion model via unitary group representation theory.  

\section{Conclusion} \label{sect:conclusion} 

This paper analyzes the recurrent model for path integration calculations by grid cells. We identify a group representation condition and an isotropic scaling condition that give rise to locally conformal embedding of the self-motion. We study a  linear prototype model that reveals the matrix Lie group of rotation, and explore the connection between the isotropic scaling condition and hexagon grid patterns. In addition to these theoretical investigations, our numerical experiments demonstrate that our model can learn hexagon grid patterns for the response maps of grid cells, and the learned model is capable of accurate long distance path integration. 

In this work, the numerical experiments are mostly limited to the linear transformation model, with the exception of an experiment with ReLU non-linearity. We will conduct experiments on the other non-linear transformation models, especially the forms assumed by the hand-crafted continuous attractor neural networks. Moreover, we assume that the agent navigates within a square open-field environment without obstacles or rewards. It is worthwhile to explore more complicated environments, including 3D environment.

\begin{ack}
The work was supported by NSF DMS-2015577, ONR MURI project N00014-16-1-2007, DARPA XAI project N66001-17-2-4029, and XSEDE grant ASC170063. We thank Yaxuan Zhu from UCLA Department of Statistics for his help with experiments on egocentric vision.  We thank Dr. Wenhao Zhang for sharing his knowledge and insights on continuous attractor neural networks.  We thank Sirui Xie for discussions. We thank the three reviewers for their constructive comments. 
\end{ack}

\bibliographystyle{ieee_fullname}
\bibliography{neurips_2021}


\newpage
\section*{Supplementary Materials}
\appendix

\section{Theoretical analysis}

\subsection{Graphical illustrations of key equations} 

Fig. \ref{fig:g1} illustrates key equations in the main text as well as in the supplementary materials. 

\begin{figure}[h]	
	\centering	
	\begin{tabular}{cc}
\includegraphics[width=.25\linewidth]{./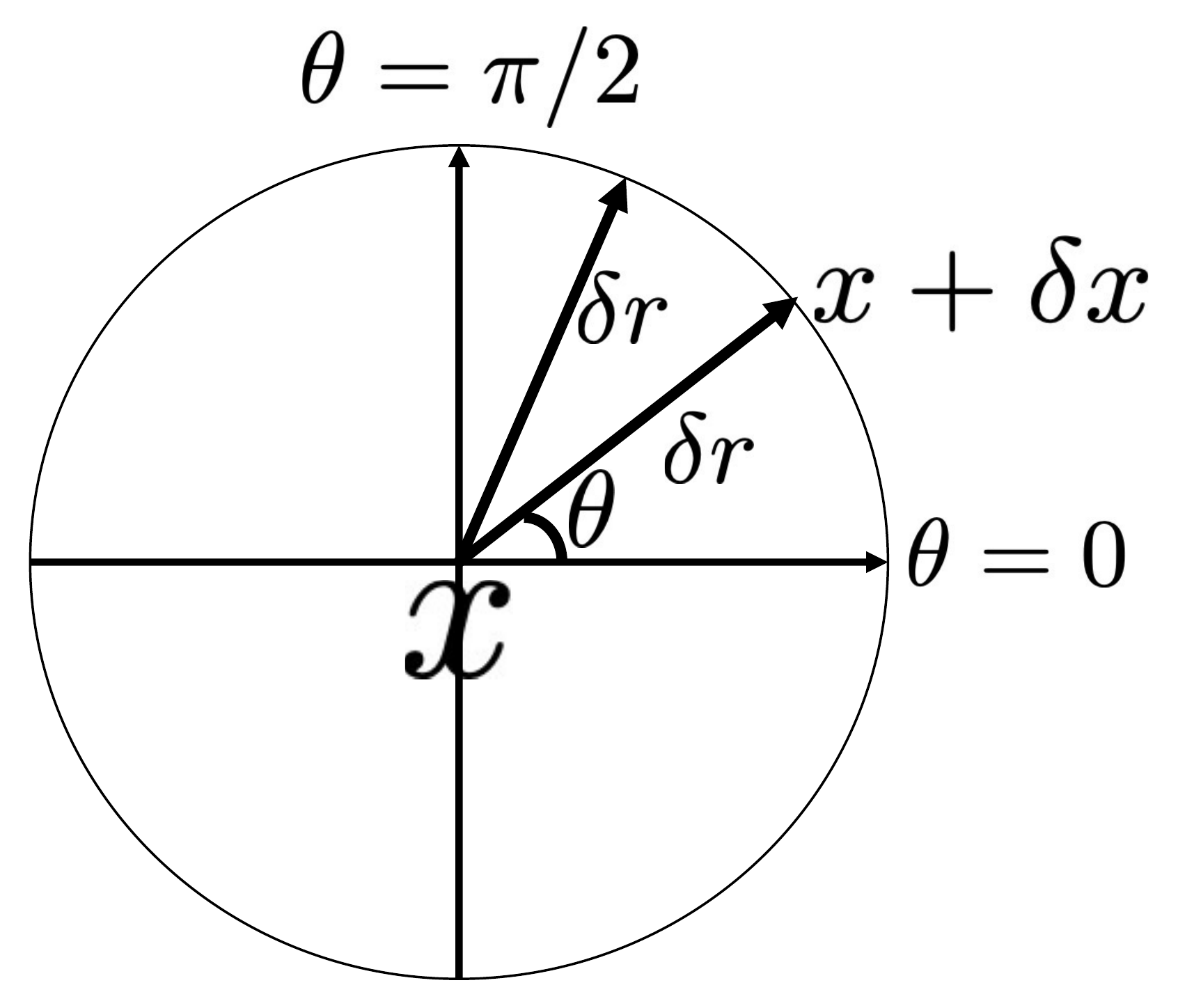}  &
		\includegraphics[width=.3\linewidth]{./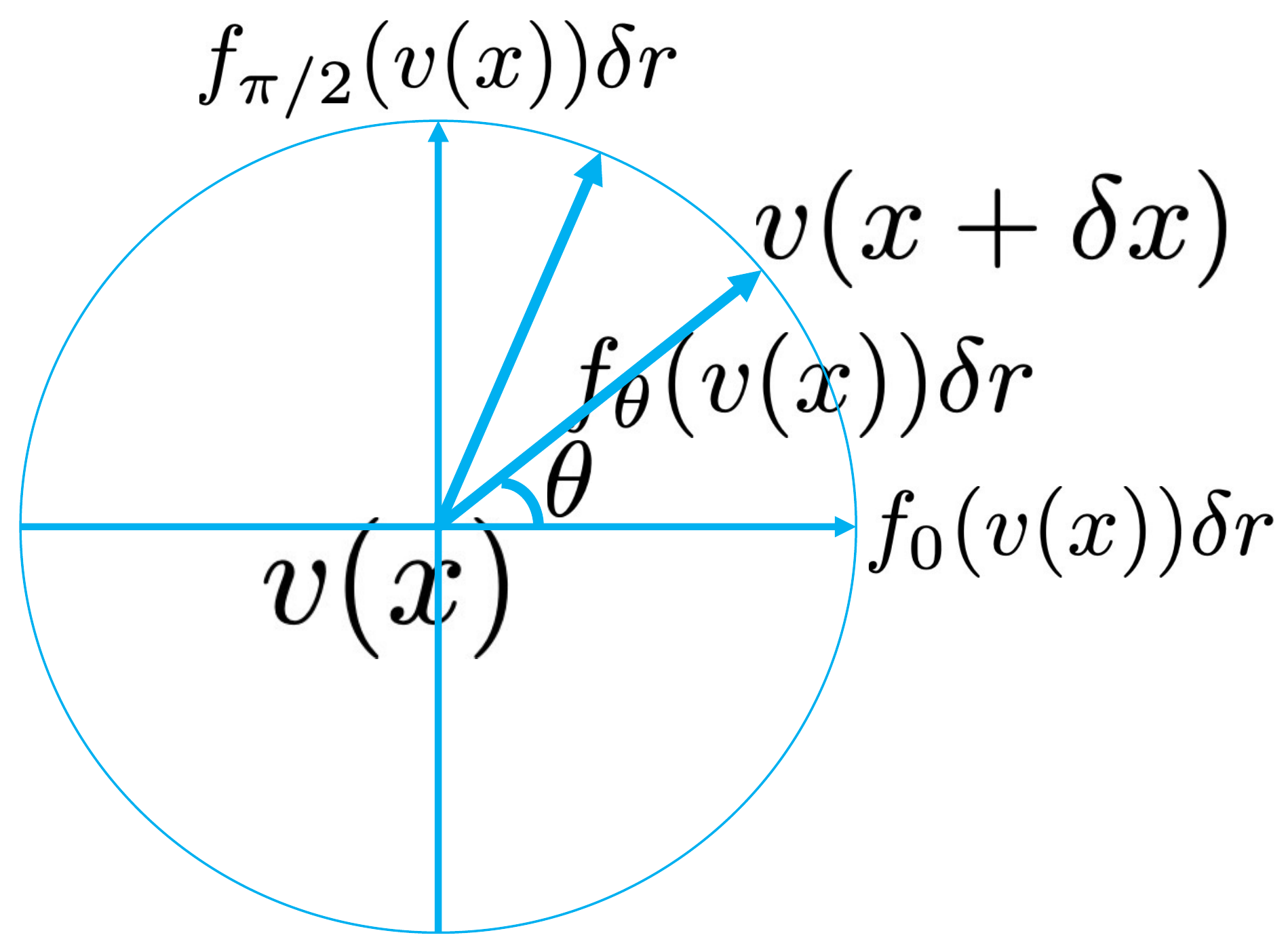}  \\
				(a) physical space & (b) neural space \\
		\includegraphics[width=.3\linewidth]{./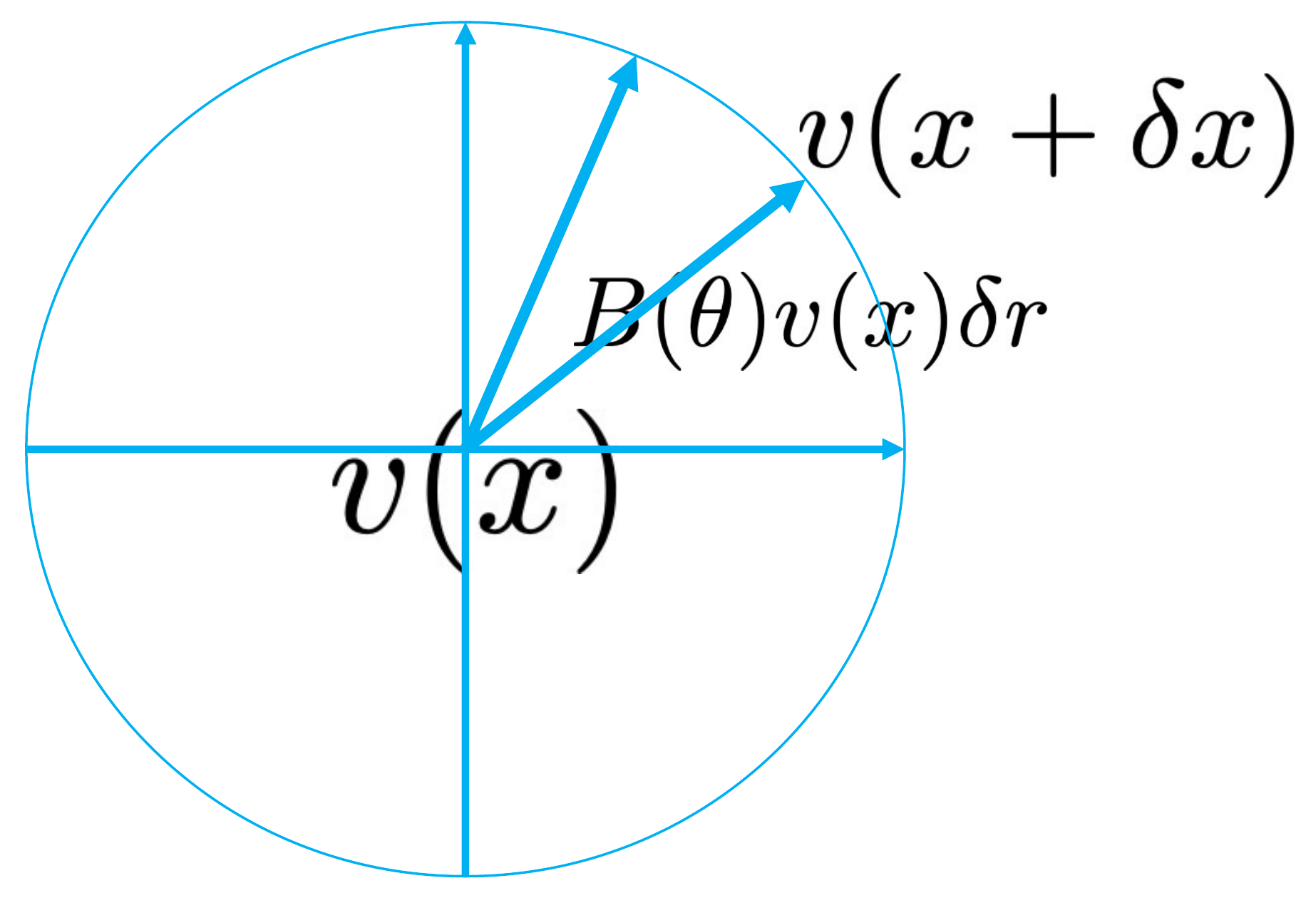}  &
		\includegraphics[width=.3\linewidth]{./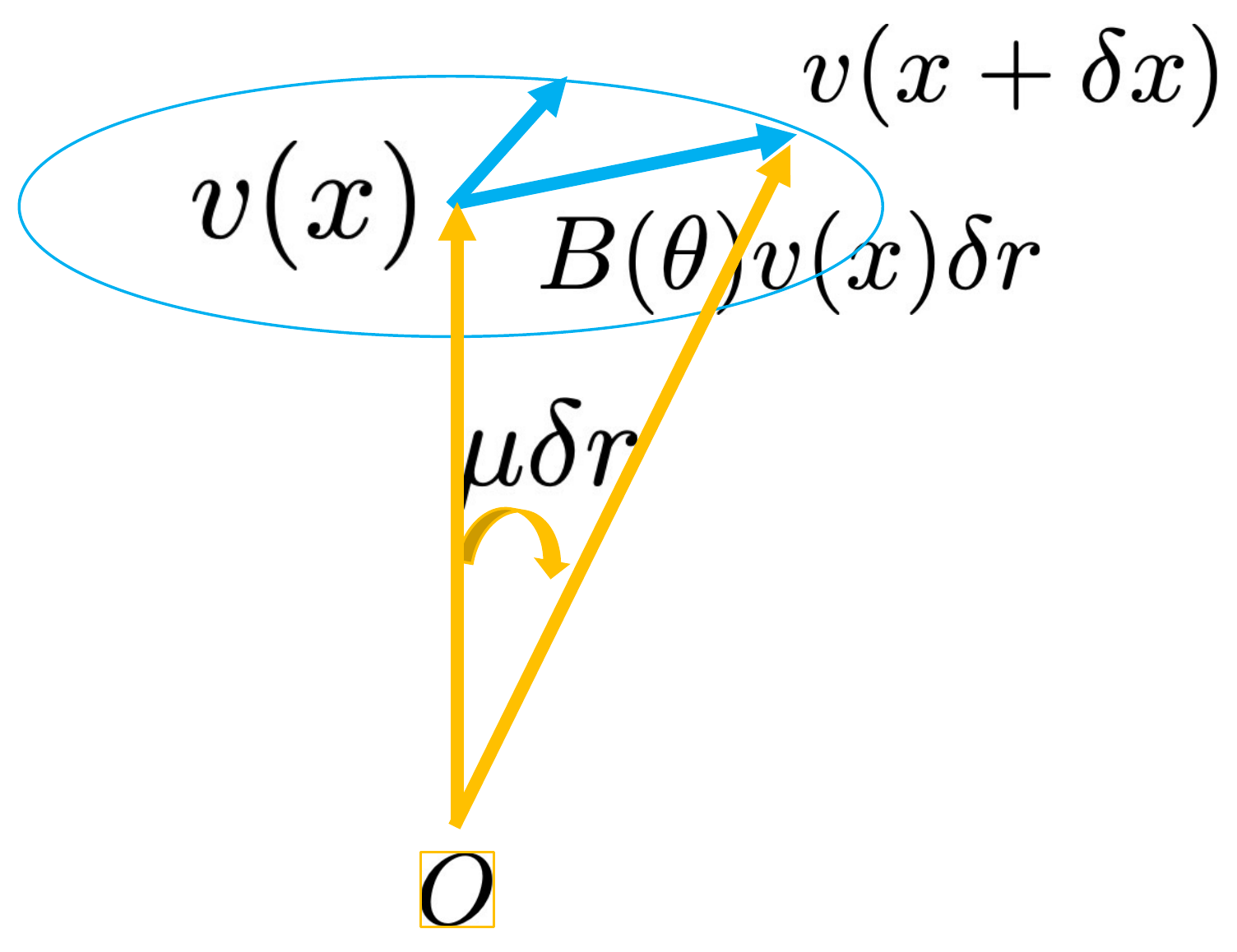} \\
 (c) linear transformation & (d) as rotation
							\end{tabular} 						
	\caption{\small Color-coded illustration. (a) In the 2D physical space, the agent moves from $\vx$ to $\vx+\delta \vx$, where $\delta \vx = (\delta r \cos \theta, \delta r \sin \theta)$, i.e., the agent moves by $\delta r$ along the direction $\theta$. We also show a displacement of $\delta r$ in a different direction. (b) In the $d$-dimensional neural space, the vector $\vv(\vx)$ is changed to $\vv(\vx+\delta \vx) = F(\vv(\vx), \delta r, \theta) = \vv(\vx) + f_\theta(\vv(\vx)) \delta r + o(\delta r)$, where the displacement is $f_\theta(\vv(\vx)) \delta r = f_{0}(\vv(\vx)) \delta r \cos \theta + f_{\pi/2}(\vv(\vx))\delta r \sin \theta$. Under the isotropic condition that $\|f_\theta(\vv(\vx))\|$ is constant over $\theta$, the local 2D self-motion $\delta \vx$ at $\vx$ in the 2D physical space is embedded conformally into the neural space as a 2D subspace around $\vv(\vx)$. (c) Linear transformation, where $f_\theta(\vv(\vx)) = \mB(\theta) \vv(\vx)$. (d) 3D perspective view of linear transformation as a rotation: $\vv(\vx+\delta \vx)$ is a rotation of $\vv(\vx)$, and the angle of rotation is $\mu \delta r$, where $\mu = \|\mB(\theta) \vv(\vx)\|/\|\vv(\vx))\|$ ($\mu$ may depend on $\vx$). }
	\label{fig:g1}
\end{figure}

\subsection{Proof of Theorem 1 on conformal embedding}

{\em Proof:} See Fig. \ref{fig:g1}(a) and (b) for an illustration. Consider the self-motion $\delta \vx = (\delta r \cos \theta, \delta r \sin \theta)$, 
 \begin{align} 
  \vv(\vx + \delta \vx) &= F(\vv(\vx), \delta r, \theta) = \vv(\vx) + f_\theta(\vv(\vx)) \delta r + o(\delta r). \label{eq:general}
  \end{align}
We can decompose the self-motion $\delta \vx$ into two steps. First move along the direction $0$ by $\delta r \cos \theta$, and then move along the direction $\pi/2$ by $\delta r \sin \theta$. Then under the {\bf group representation condition}: 
 \begin{align} 
    \vv(\vx + \delta \vx) 
    &= F[F(\vv(\vx), \delta r \cos \theta, 0),  \delta r \sin \theta, \pi/2)] \nonumber \\
   & = F[\vv(\vx) + f_0(\vv(\vx)) \delta r \cos \theta + o(\delta r), \delta r \sin \theta, \pi/2] \nonumber \\
   &=[\vv(\vx) + f_0(\vv(\vx)) \delta r \cos \theta] 
   + f_{\pi/2}[\vv(\vx) + f_0(\vv(\vx)) \delta r \cos \theta + o(\delta r)] \delta r \sin \theta + o(\delta r) \nonumber \\
    &= \vv(\vx) + f_0(\vv(\vx)) \delta r \cos \theta + f_{\pi/2}(\vv(\vx)) \delta r \sin \theta + o(\delta r), 
    \label{eq:group}
 \end{align}
 The last equation holds because assuming the derivative $f'_{\pi/2}(\vv(\vx))$ exists, then by first-order Taylor expansion,
 \begin{align}
 & f_{\pi/2}[\vv(\vx) + f_0(\vv(\vx)) \delta r \cos \theta + o(\delta r)] \delta r \sin \theta \\
  = & [f_{\pi/2}(\vv(\vx)) + f'_{\pi/2}(\vv(\vx)) f_0(\vv(\vx))\delta r \cos\theta + o(\delta r)] \delta r \sin \theta \\
  = & f_{\pi/2}(\vv(\vx)) \delta r \sin \theta +  o(\delta r).
 \end{align} 

 Since $\vv(\vx + \delta \vx) = \vv(\vx) + f_\theta(\vv(\vx)) \delta r + o(\delta r)$, by Eq. (\ref{eq:group}) we have $f_\theta(\vv(\vx)) = f_0(\vv(\vx)) \cos \theta + f_{\pi/2}(\vv(\vx)) \sin \theta$, which is a 2D basis expansion. We are yet to prove that the two basis vectors $f_0(\vv(\vx))$ and $f_{\pi/2}(\vv(\vx))$ are orthogonal with equal norm. 
 
 For notational simplicity, let $\vv_1 = f_0(\vv(\vx))$ and $\vv_2 = f_{\pi/2}(\vv(\vx))$. Then under the {\bf isotropic scaling condition}, $\|\vv_1\| = \|\vv_2\| = \|f_\theta(\vv(\vx))\| = s$, and $f_\theta(\vv(\vx)) = \vv_1 \cos \theta + \vv_2 \sin \theta$ for any $\theta$. Then we have that for any $\theta$, 
 \begin{align} 
s^2 &=  \|f_\theta(\vv(\vx))\|^2 = \|\vv_1 \cos \theta + \vv_2 \sin \theta\|^2 
  = s^2  + 2 \langle \vv_1, \vv_2 \rangle \cos \theta \sin \theta. 
\end{align}
Thus $\langle \vv_1, \vv_2\rangle = 0$, i.e., $f_0(\vv(\vx)) \perp f_{\pi/2}(\vv(\vx))$. This leads to the conformal embedding of the local 2D polar system in the physical space as a 2D polar system in the $d$-dimensional neural space, with a scaling factor $s$ (which may depend on $\vx$). $\square$

\subsection{Proofs of Theorem 2 and Proposition 1 on error correction} \label{supp:err}

{\em Proof of Theorem 2:} By Theorem 1, for a fixed self-position $\vx$, we embed the 2D local neighborhood around $\vx$ as a local 2D plane around $\vv(\vx)$ in the $d$-dimensional neural space. A local perturbation in self-position, $\delta \vx$, is translated into a local perturbation in $\vv(\vv+\delta \vx)$, so that 
\begin{align}
  \|\delta \vv\|^2 = \|f_\theta(\vv(\vx)) \delta r + o(\delta r)\|^2 =  s^2 \|\delta \vx\|^2, 
\end{align}
where $\delta \vv = \vv(\vx+\delta \vx) - \vv(\vx)$.

Suppose the agent infers its 2D position $\hat{\vx}$ by $\hat{\vx} = \arg\min_{\vx'} \|\vv - \vv(\vx')\|^2$, which amounts to projecting $\vv$  onto the local 2D plane around $\vv(\vx)$. The projected vector $\vv(\hat{\vx})$ on the local 2D plane is $\vv(\vx) + \delta \vv$,  where $\delta \vv$ is the projection of ${\bf \epsilon}$ onto the 2D plane. More specifically, let $(\vv_1, \vv_2)$ be an orthonormal basis of the local 2D plane centered at $\vv(\vx)$. Then $\delta \vv$ can be written as $e_1  \vv_1 + e_2  \vv_2$,  where  
\begin{align}
\ve = (e_1, e_2)^\top = (\vv_1, \vv_2)^\top \epsilon \sim \mathcal{N}(0, \tau^2 \mI_2).
\end{align} 
Let $\delta \vx = \hat{\vx} - \vx$. 
Due to {\bf isotropic scaling and conformal embedding}, the $\ell_2$ squared error  translate according to 
\begin{align}
\|\delta \vx\|^2 = \|\delta \vv\|^2/s^2 = (e_1^2 + e_2^2)/s^2,
\end{align}
whose expectation is $2 \tau^2/s^2$. Thus $\E\|\hat{\vx} - \vx\|^2 = 2 \tau^2/s^2$.$\square$

{\em Proof of Proposition 1:} It is reasonable to assume $\tau^2 = \alpha^2 (\|\vv(\vx)\|^2/d)$,  where $\alpha^2$ measures the variance of noise relative to $\|\vv(\vx)\|^2/d$, which is the average of $(v_i(\vx)^2, i = 1, ..., d)$. In other words, $\alpha^2$ measures the noise level. 

In the linear case, the metric is 
\begin{align}
\mu = \|f_\theta(\vv(\vx))\|/\|\vv(\vx)\| = \|\mB(\theta) \vv(\vx))\|/\|\vv(\vx)\| = s/\|\vv(\vx)\|,
\end{align}
 which measures how fast $\vv(\vx)$ rotates in the neural space as $\vx$ changes. Then 
\begin{align}
 \E  \|\delta \vx\|^2 = 2 \alpha^2 /(\mu^2 d). 
\end{align}
The above scaling shows that error correction depends on two factors. One is the metric $\mu$, and the other is the dimensionality $d$, i.e., the number of neurons. These correspond to two phases of error correction. One is to project the $d$-dimensional ${\bf \epsilon}$ to the 2-dimensional $\delta \vv$. The bigger $d$ is, the better the error correction. The other is to translate $\|\delta \vv\|^2$ to $\|\delta \vx\|^2$. The bigger $\mu$ is, the better the error correction. $\square$

\subsection{Proof of Theorem 4 on hexagon grid patterns}

 {\em Proof:} Let $\ve(\vx) = (\exp(i \langle \va_j,\vx\rangle), j = 1, 2, 3)^\top$, where $(\va_j, j = 1, 2, 3)$ are three 2D vectors of equal norm, and the angle between every pair of them is  $2\pi/3$. Let $\vv(\vx) = \mU \ve(\vx)$, where $\mU$ is an arbitrary unitary matrix, i.e., $\mU^*\mU = \mI$. Then $\|\vv(\vx)\|^2 = \|\ve(\vx)\|^2 = 3$, $\forall \vx$, and  $\ve(\vx) = \mU^* \vv(\vx)$. For self-motion $\delta \vx = (\delta r \cos \theta, \delta r \sin \theta) =  \vq(\theta) \delta r$, let 
 \begin{align}
 \Lambda(\delta \vx, \theta) &= {\rm diag}(\exp(\langle \va_j, \delta \vx\rangle), j = 1, 2, 3) \nonumber \\
 &=  {\rm diag}(\exp(\langle \va_j, \vq(\theta)\rangle \delta r), j = 1, 2, 3) \nonumber\\
 &= \mI + {\rm diag}(i \langle \va_j, \vq(\theta)\rangle), j = 1, 2, 3) \delta r + o(\delta r) \nonumber\\
 &= \mI + \mD(\theta) \delta r + o(\delta r). 
 \end{align}
 Then 
 \begin{align} 
    \vv(\vx+\delta \vx) &= \mU \ve(\vx+\delta x)\nonumber \\
    &= \mU \Lambda(\delta \vx, \theta) \ve(\vx) \nonumber\\
    &= \mU \Lambda(\delta \vx, \theta) \mU^* \vv(\vx) \nonumber\\
    &=( \mI + \mU \mD(\theta) \mU^* \vv(\vx) \delta r)\vv(\vx) + o(\delta r) \nonumber\\
    &=( \mI + \mB(\theta) \delta r )\vv(\vx) + o(\delta r),
 \end{align} 
 where $\mB(\theta) = \mU \mD(\theta) \mU^*$, and $\mB(\theta) = - \mB(\theta)^*$. 
 For isotropic condition, 
 \begin{align}
    \|\mB(\theta) \vv(\vx)\|^2 &= \| \mD(\theta) \ve(\vx)\|^2 \nonumber\\
    &= \sum_{j=1}^3 \langle \va_j, \vq(\theta)\rangle^2 \nonumber\\
    &= {\rm const}\|\va_j\|^2 \|\vq(\theta)\|^2 = {\rm const} \|\va_j\|^2, 
 \end{align}
 which is independent of $\theta$, because $(\va_j, j = 1, 2, 3)$ forms a {\bf tight frame} in 2D. 
 
 One example of $\mU$ is the following matrix: 
\begin{eqnarray}
\frac{1}{\sqrt{3}}
\begin{pmatrix}
1 & 1 & 1\\
1 & \exp(i 2 \pi/3) & \exp(- i 2 \pi/3) \\
1 & \exp(- i 2 \pi/3) & \exp( i 2 \pi/3) 
\end{pmatrix}
\end{eqnarray}
The resulting $(v_i(x), i = 1, 2, 3)$ have the same orientation but different phases, i.e., they are spatially shifted versions of each other. $\square$

The limitation of Theorem 4 is that we only show $\vv(\vx) = \mU \ve(\vx)$ satisfies the linear model and the isotropic scaling condition, but we did not show that linear model with isotropic condition only has solutions that are hexagon grid patterns. 

\subsection{From group representation to orthogonal basis functions} \label{supp:orth}

Group representation is a central theme in modern mathematics and physics. In particular, it leads to a deep understanding and   generalization of Fourier analysis or harmonic analysis. 

For the set of $(\Delta \vx)$ that form a group, a matrix representation $\mM(\Delta \vx)$ is equivalent to another representation $\tilde{\mM}(\Delta \vx)$ if there exists an invertible matrix $\mP$ such that $\tilde{\mM}(\Delta \vx) = \mP \mM(\Delta \vx) \mP^{-1}$ for each $\vx$. A matrix representation is reducible if it is equivalent to a block diagonal matrix representation, i.e., we can find a matrix $\mP$, such that $\mP \mM(\Delta \vx) \mP^{-1}$ is block diagonal for every $\Delta \vx$. Suppose the group is a finite group or a compact Lie group, and $\mM$ is a unitary representation, i.e., $\mM(\Delta \vx)$ is a unitary matrix. If $\mM$ is block-diagonal, $\mM = {\rm diag}(\mM_{k}, k = 1, ..., K)$, with non-equivalent blocks, and each block $\mM_{k}$ cannot be further reduced, then the matrix elements $(M_{kij}(\Delta \vx))$ are orthogonal basis functions of $\Delta \vx$. Such orthogonality relations are proved by Schur~\citep{zee2016group} for finite group, and by Peter-Weyl for compact Lie group~\citep{taylor2002lectures}. For our case, theoretically the group of displacements $\Delta \vx$ in the 2D domain is $\R^2$, but we learn our model within a finite range, and we further discretize the range into a lattice. Thus the above orthogonal relations hold. 

 In our model, we also assume block diagonal $\mM$, and we call each block a module. However, we do not assume each module is irreducible, i.e., each module itself may be further diagonalized into a block diagonal matrix of irreducible sub-blocks. Thus the elements within the same module $\vv_{k}(\vx)$ may be linear mixings of orthogonal basis functions of the irreducible sub-blocks, and the linear mixings themselves are not necessarily orthogonal. 

 \begin{figure}[ht]
\centering
  \includegraphics[width=.5\linewidth]{./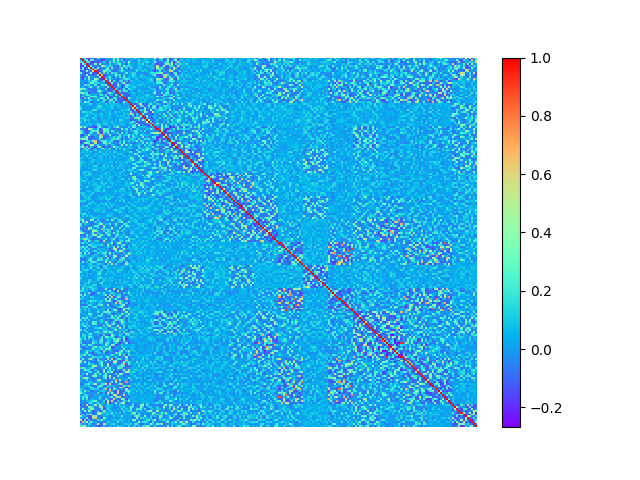}
  \caption{\small Correlation heatmap for each pair of the learned $v_i(\vx)$ and $v_j(\vx)$. The correlations are computed over $40 \times 40$ lattice of $\vx$.}
  \label{fig: correlation}
\end{figure} 

Fig. \ref{fig: correlation} visualizes the correlation between pairs of the learned $\vv_i(\vx)$ and $\vv_j(\vx)$, $i,j=1,...,d$. For different $i$ and $j$, the correlations between different $\vv_i(\vx)$ and $\vv_j(\vx)$ are close to zero; i.e., they are nearly orthogonal to each other. The average absolute value of correlation is 0.09, and the within-block average value is about the same as the between-block average value. 

Unlike previous work on learning basis expansion model (or PCA-based model \citep{dordek2016extracting}), we {\bf do not constrain the basis functions $\vv(\vx) = (\evv_i(\vx), i= 1, ..., d)$ to be orthogonal to each other}. Instead, we constrain them by our path integration model  via the loss term $L_1$. Nonetheless, the learned $\evv_i(\vx)$ are close to being orthogonal in our experiments.

\subsection{Decoding and re-encoding} 

In the above analysis, the projection of $\vv$ onto the local 2D plane around $\vv(\vx)$ is $\hat{\vx} = \arg\min_{\vx'}\|\vv - \vv(\vx')\|^2$, which, for the linear model, amounts to decoding $\vv$ to $\hat{\vx}$ via 
\begin{align} 
   \hat{\vx} = \arg \max_{x'} \langle \vv, \vv(\vx')\rangle, \label{eq:de1}
\end{align} 
because $\|\vv(\vx')\|^2$ is constant. We project $\vv$ to $\vv(\hat{\vx})$, which is an re-encoding of $\vv$. 

We can also perform decoding via the learned $\vu(\vx')$: 
\begin{align} 
   \hat{\vx} = \arg \max_{x'} \langle \vv, \vu(\vx')\rangle, \label{eq:de2} 
\end{align} 
and re-encoding $\vv \leftarrow \vv(\hat{\vx})$. For the above decoding, the heat map 
\begin{align}
\vh(\vx') = \langle \vv, \vu(\vx')\rangle = \langle \vv(\vx), \vu(\vx')\rangle + \langle {\bf \epsilon}, \vu(\vx')\rangle
 = A(\vx, \vx') + \ve(\vx'),
\end{align}
 where $\ve(\vx')= \langle {\bf \epsilon}, \vu(\vx')\rangle \sim \mathcal{N}(0, \alpha^2 \|\vv(\vx)\|^2 \|\vu(\vx')\|^2/d)$. For $A(\vx, \vx') = \exp(-\|\vx-\vx'\|^2/(2 \sigma^2)) = \langle \vv(\vx), \vu(\vx')\rangle$, if $\sigma^2$ is small, $A(\vx, \vx')$ decreases to 0 quickly, i.e., if $\|\vx'-\vx\| >  c$, then $A(\vx, \vx') < \exp(-c^2/(2\sigma^2))$, and the chance for the maximum of $\vh(\vx')$ to be achieved at an $\vx'$ so that $\|\vx'-\vx\| > c$ can be very small. The above analysis also provides a justification for regularizing $\|\vu(\vx')\|^2$ in learning. 

For error correction, we want to use small $\sigma^2$. However, for path planning, we need large $\sigma^2$ so that we can assess the adjacency as well as the change of the adjacency between the position on the path and the target position even if they are far apart.

 \begin{figure}[h]
\centering
  \includegraphics[width=.5\linewidth]{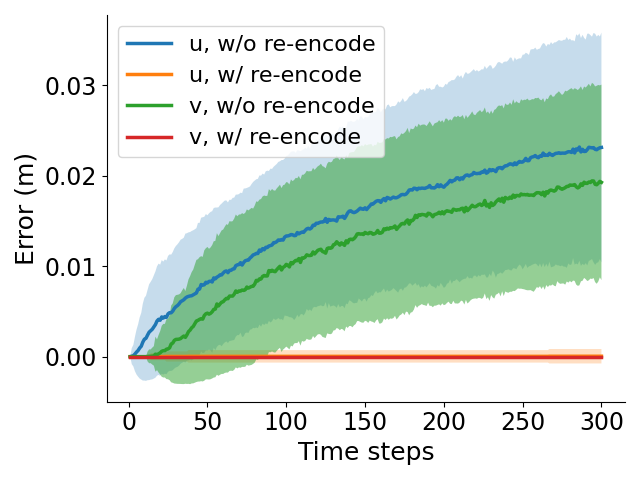}
  \caption{\small Path integration error over number of time steps. The mean and standard deviation band is computed over 1,000 episodes. ``$\vv$'' means decoding by Eq. (\ref{eq:de1}), and ``$\vu$'' means decoding by Eq. (\ref{eq:de2}). The squared domain is $1$m $\times$ $1$m.}
  \label{fig: path-integration}
\end{figure} 

In the experiments in the main text, we use Eq. (\ref{eq:de2}) for decoding. In Fig. \ref{fig: path-integration}, we also show the results of path integration using Eq. (\ref{eq:de1}) for decoding, whose performance is even better than Eq. (\ref{eq:de2}). Especially the error would remain 0 over 300 time steps and 1,000 episodes using Eq. (\ref{eq:de1}) with re-encoding. The advantage of (\ref{eq:de1}) is that error correction is achieved within the grid cells system itself without interacting with the place cells. 

\subsection{Connection to continuous attractor neural network (CANN) defined on 2D torus}

The CANN models \citep{amit1992modeling,burak2009accurate,Couey2013,Pastoll2013,Agmon2020} assume that the grid cells $\vv(\vx) = (\evv_i(\vx),  i = 1, ..., d)$ are placed on a finite 2D square lattice with periodic boundary condition, i.e., a 2D torus $\mathbb{T}$. If the lattice is $N \times N$, then $d = N^2$. Let $\vz \in \mathbb{T}$ be the 2D coordinate of a pixel in $\mathbb{T}$, then each grid cell $\evv_i$ is placed on a unique $\vz_i \in \mathbb{T}$. 

A CANN model hand-crafts the non-linear recurrent transformation $\vv(\vx+\Delta \vx) = F(\vv(\vx), \Delta \vx)$ for some parametric form of $F$, and the coding manifold $(\vv(\vx), \forall \vx)$ consists of the attracting fixed points of $F(\cdot, 0)$. In CANN, the recurrent connection weights between a pair of grid cells $(\evv_i, \evv_j)$ only depend on the relative positions of the two cells on the 2D torus, $\vz_i - \vz_j$, i.e.,  the connection weights are convolutional. Such a topographical arrangement may be physically realized on the 2D surface of the cortex in the brain, but it may also be the conceptual interpretation of the connection weights between the grid cells that are not necessarily placed on a physical 2D torus in the brain. 

If we place the grid cells $\vv = (\evv_i, i = 1, ..., d)$ on the $d = N \times N$ lattice of the 2D torus, either physically or conceptually, then their activities $\vv(\vx) = (\evv_i(\vx), i = 1, ..., d)$ form an $N \times N$ ``image'' defined on the 2D torus. The pattern of the ``image'' may be a localized ``bump'', i.e., only a local subset of the pixels of the $N \times N$ lattice have non-zero activities. Suppose each self-position $\vx$ of the agent can be mapped to a ``bump'' on the 2D torus centered at a corresponding $\vz \in \mathbb{T}$. When the agent moves in the 2D physical space, i.e., when $\vx$ changes to $\vx + \Delta \vx$, then the ``bump'' formed by $\vv(\vx) = (\evv_i(\vx), i = 1, ..., d = N^2)$ moves on the 2D torus from $\vz$ to $\vz + \Delta \vz$, while the shape of the ``bump'' remains the same. The connection weights of the CANN are hand-crafted so that the recurrent transformation of CANN realizes such a ``mirroring''  movement of the ``bump''. 

If each displacement $\Delta x$ of the agent in the 2D physical space can be mapped to a displacement $\Delta \vz$ of the ``bump'' on the 2D torus $\mathbb{T}$, then the recurrent transformation of the CANN forms a representation of the 2D Euclidean group $\mathbb{R}^2$. If the local movement of the ``bump'' $\delta \vz$ on the 2D torus is furthermore conformal to the local movement $\delta \vx$ of the agent in the 2D physical space, then the local movement of the $d = N^2$ dimensional vector $\vv(\vx) = (\evv_i(\vx), i = 1, ..., d = N^2)$ formed by the grid cells in the $d$-dimensional neural space, i.e., $\vv(\vx+\delta \vx) - \vv(\vx)$, is also conformal to the movement $\delta \vx$ of the agent in the 2D physical space, and the isotropic scaling condition also holds. 

In the above understanding, there are three types of movements. (1) The movement $\Delta \vx$ of the agent in the 2D physical space $\mathbb{R}^2$. (2) The movement $\Delta \vz$ of the ``bump'' on the $N \times N$ lattice of 2D torus $\mathbb{T}$. (3) The movement $\vv(\vx+\Delta \vx) - \vv(\vx)$ in the $d = N^2$-dimensional neural space.  

Our model on either the general transformation or the linear transformation does not assume a 2D torus topography. In fact, no 2D topographical structure whatsoever is assumed in our model. The topographical arrangement is not part of our model. Instead, it may be treated as an implementation issue after the model is learned, i.e., how to arrange the grid cells physically on a 2D surface of cortex so that a pair of grid cells with strong connection weights are placed close to each other. It may also be treated as an interpretation issue after the model is learned, i.e., how to interpret the learned connection weights.   

Even though our model does not make topographical assumptions, our linear transformation model appears to learn the torus topography automatically. Specifically, in our learned model, the response maps of the grid cells within each module are spatially shifted versions of the same hexagon periodic pattern. Therefore we can identify two directions in the 2D physical space that are $2\pi/3$ apart, so that $\vv(\vx)$ rotates back to itself as $\vx$ moves along these two directions for a certain distance. This implies that the codebook manifold $(\vv(\vx), \forall \vx)$ forms a 2D torus as assumed by CANN models. Moreover, the fact that the learned response maps of the grid cells within each module are spatially shifted versions of the same hexagon periodic pattern also agrees with the CANN model that moves the ``bump'' on the 2D torus by ``mirroring'' the motion in the 2D physical space. The learned hexagon periodic patterns and the spatial shifts of the response maps may be related to the optimality of the hexagon grid in terms of sampling, interpolation and packing. 

Even though the CANN model realizes the movement of the ``bump'' on the 2D torus by a non-linear recurrent model, such movement is a cyclic permutation of the activities of the grid cells, and the permutation can be realized by a permutation matrix, which is an orthogonal matrix. Thus the $\vv(\vx)$ that satisfies the non-linear CANN model also satisfies our linear transformation model, where the linear rotation matrix is a cyclic permutation matrix. 

The torus topology is hardly surprising, even for the general transformation model. The Lie group formed by $(F(\cdot, \Delta \vx), \forall \Delta \vx)$ is abelian as it is a representation of the 2D additive Euclidean group $\mathbb R^2$. If a connected abelian Lie group is compact, then the group is automatically a torus. See \cite{dwyer1998elementary}. 

Furthermore, if the scaling factor $s$ is globally a constant for all $\vx$, then the position embedding $(\vv(\vx), \forall \vx)$ is an isometric embedding up to a global scaling factor, and its intrinsic geometry remains Euclidean. It thus is a {\bf flat torus}. 

\section{Experiments}
\subsection{Implementation details} \label{supp:MC}
\paragraph{Monte Carlo samples.} The expectations in loss terms are approximated by Monte Carlo samples. Here we detail the generation of Monte Carlo samples. For $(\vx, \vx')$ used in $L_0 = \E_{\vx, \vx'}[A(\vx, \vx') - \langle \vv(\vx), \vu(\vx')\rangle]^2$, $\vx$ is first sampled uniformly within the entire domain, and then the displacement $d\vx$ between $\vx$ and $\vx'$ is sampled from a normal distribution $\mathcal{N}(0, \sigma^2 \mI_2)$, where $\sigma=0.48$. This is to ensure that nearby samples are given more emphasis. We let $\vx' = \vx + d\vx$, and those pairs $(\vx, \vx')$ within the range of domain (i.e., $1$m $\times$ $1$m, $40 \times 40$ lattice) are kept as valid data. For $(\vx, \Delta \vx)$ used in $L_1 = \E_{\vx, \Delta \vx}  |\vv(\vx+\Delta \vx) - \exp(\mB(\theta) \Delta r) \vv(\vx)|^2$, $\Delta \vx$ is sampled uniformly within a circular domain with radius equal to $3$ grids and $(0, 0)$ as the center. Specifically, $\Delta r^2$, the squared length of $\Delta \vx$, is sampled uniformly from $[0, 3]$ grids, and $\theta$ is sampled uniformly from $[0, 2\pi]$. We take the square root of the sampled $\Delta r^2$ as $\Delta r$ and let $\Delta \vx = (\Delta r \cos \theta, \Delta r \sin \theta)$. Then $\vx$ is uniformly sampled from the region such that both $\vx$ and $\vx + \Delta \vx$ are within the range of domain. For $(\theta, \Delta \theta)$ used in $L_2 = \sum_{k=1}^{K} \E_{\vx, \theta, \Delta \theta} [\|\mB_k(\theta + \Delta \theta) \vv_k(\vx)\| - \|\mB_k(\theta) \vv_k(\vx)\| ]^2$, we uniformly sample $\theta$ and $\theta + \Delta \theta$ from discretized angles, i.e., $144$ directions discretized for circle $[0, 2\pi]$. We will study sampling only small $\Delta \theta$ in the future. 

\paragraph{Training details.} The model is trained for 14,000 iterations. At each iteration, the samples are generated online. For the first 8,000 iterations, we update all learnable parameters, while for the following iterations, we fix the learned $\vv(\vx)$ and update the other learnable parameters. The initial learning rate is set as $0.003$ and is decreased by a factor of 0.5 every 500 iterations after 8,000 iterations. We use Adam~\cite{kingma2014adam} optimizer. The model is trained on a single Titan XP GPU. We apply the maximum batch size that can fit into the single GPU, which is 90,000. It takes about 3.5 hours to train the model on a single Titan XP GPU. 
 
\paragraph{Baseline methods.} In Table 1 of the main text, we compare the learned neurons with the ones from other two optimization-based learning methods \cite{banino2018vector,sorscher2019unified}. For \cite{banino2018vector}, we run the code released by the authors (\url{https://github.com/deepmind/grid-cells}) to learn the model and compute gridness scores for the learned neurons. For \cite{sorscher2019unified}, we use the pre-trained weights released by the authors (\url{https://github.com/ganguli-lab/grid-pattern-formation}) to get the learned neurons and compute the gridness scores. Both the code of \cite{banino2018vector} and pre-trained weights of \cite{sorscher2019unified} use Apache License V2. 

\paragraph{Usage of data.} In this paper, we mainly use simulated trajectories as training data, and thus we do not think that the data contain any personally identifiable information or offensive content. The only existing data we use is the pre-trained weights of the baseline method \cite{sorscher2019unified}. Under Apache License V2, we believe it is fully approved by the authors to use the pre-trained weights. 

 \subsection{Learned patterns}
 
Fig. \ref{fig:autocorr} displays the autocorrelograms of learned patterns of $\vv(\vx)$.

\begin{figure}[h]
\centering
  \includegraphics[width=\linewidth]{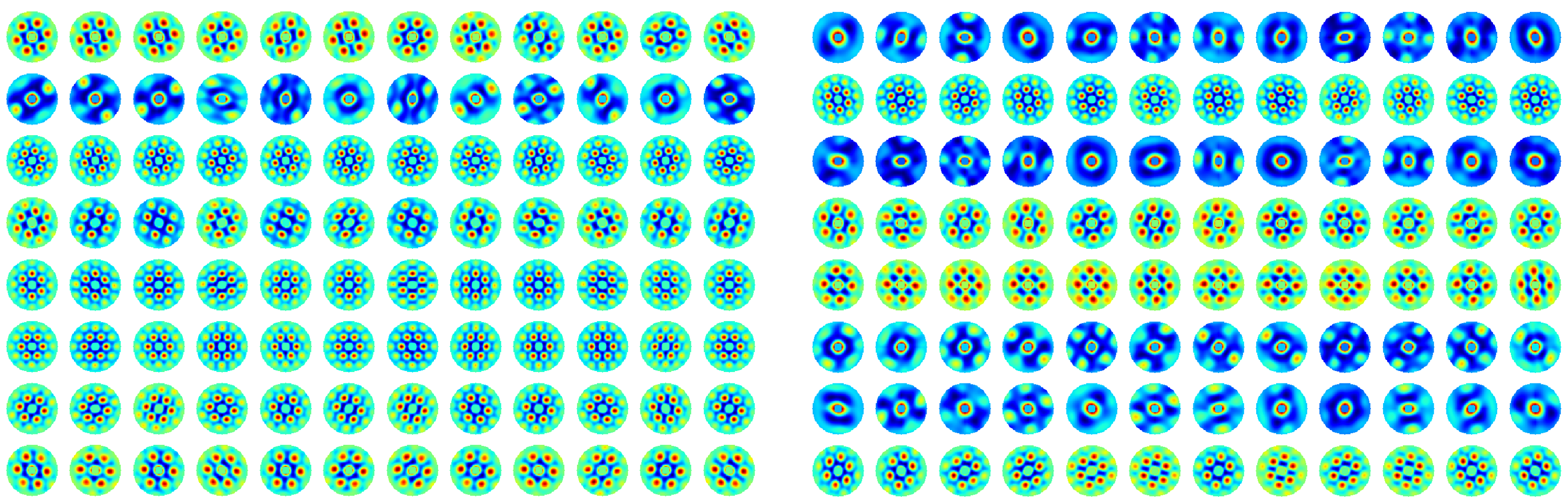}
  \caption{\small Autocorrelograms of the learned patterns of $\vv(\vx)$.}
  \label{fig:autocorr}
\end{figure} 

Fig. \ref{fig: u} shows the learned patterns of $\vu(\vx)$ with $16$ blocks of  $12$ cells in each block. Regular hexagon patterns also emerge. 
 
\begin{figure}[h]
\centering
  \includegraphics[width=\linewidth]{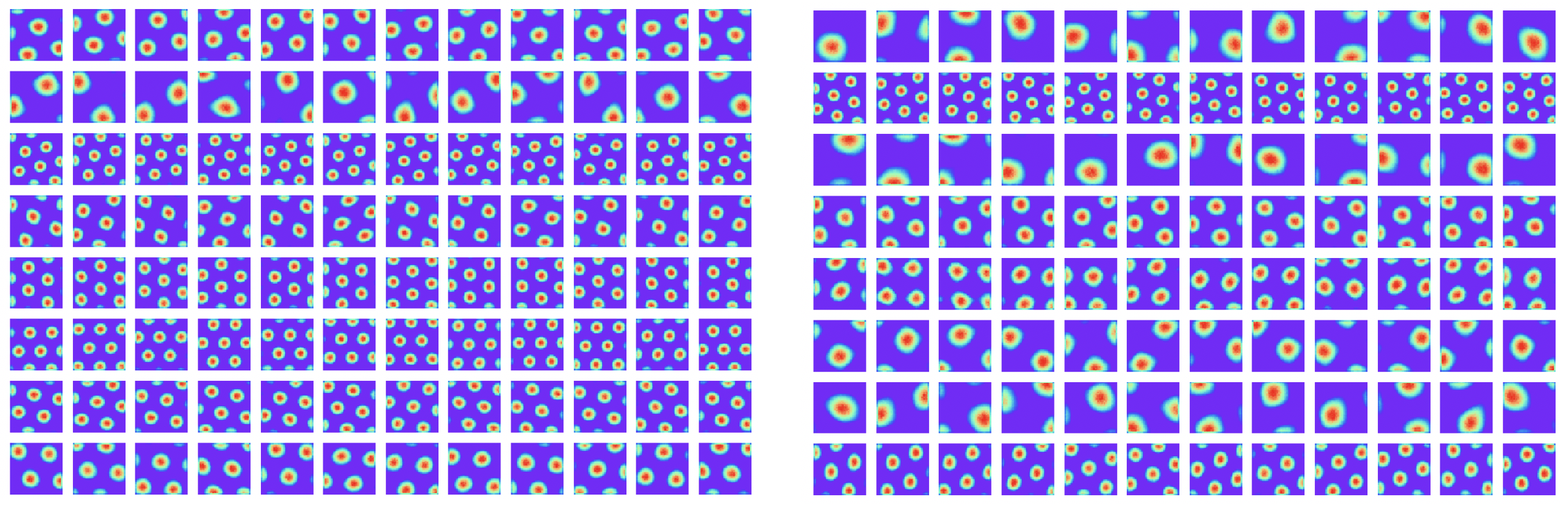}
  \caption{\small Learned patterns of $\vu(\vx)$ with $16$ blocks of size $12$ cells in each block. Every row shows the learned patterns within the same block.}
  \label{fig: u}
\end{figure}

For learned firing patterns of $\vv(\vx)$, we also display the histogram of grid orientations in Fig. \ref{fig:orientation}, where we do not observe clear clusters. 

\begin{figure}[h]
\centering
  \includegraphics[width=.3\linewidth]{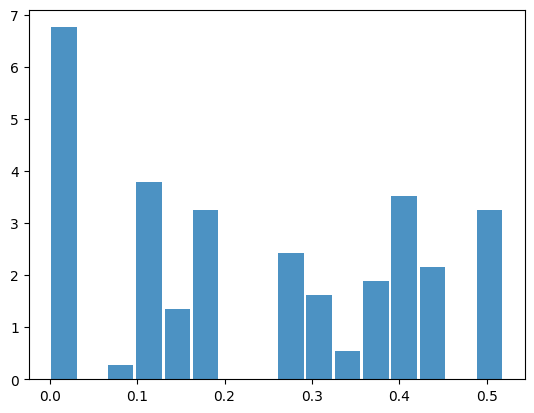}
  \caption{\small Histogram of grid orientations of the learned firing patterns of $\vv(\vx)$. }
  \label{fig:orientation}
\end{figure}

In Fig. \ref{fig:B}, we show the learned patterns of a block of $\mB(\theta)$. Each element shows significant sine/cosine tuning over $\theta$. For the other blocks, the patterns are all similar.  

\begin{figure}[h]
\centering
  \includegraphics[width=.7\linewidth]{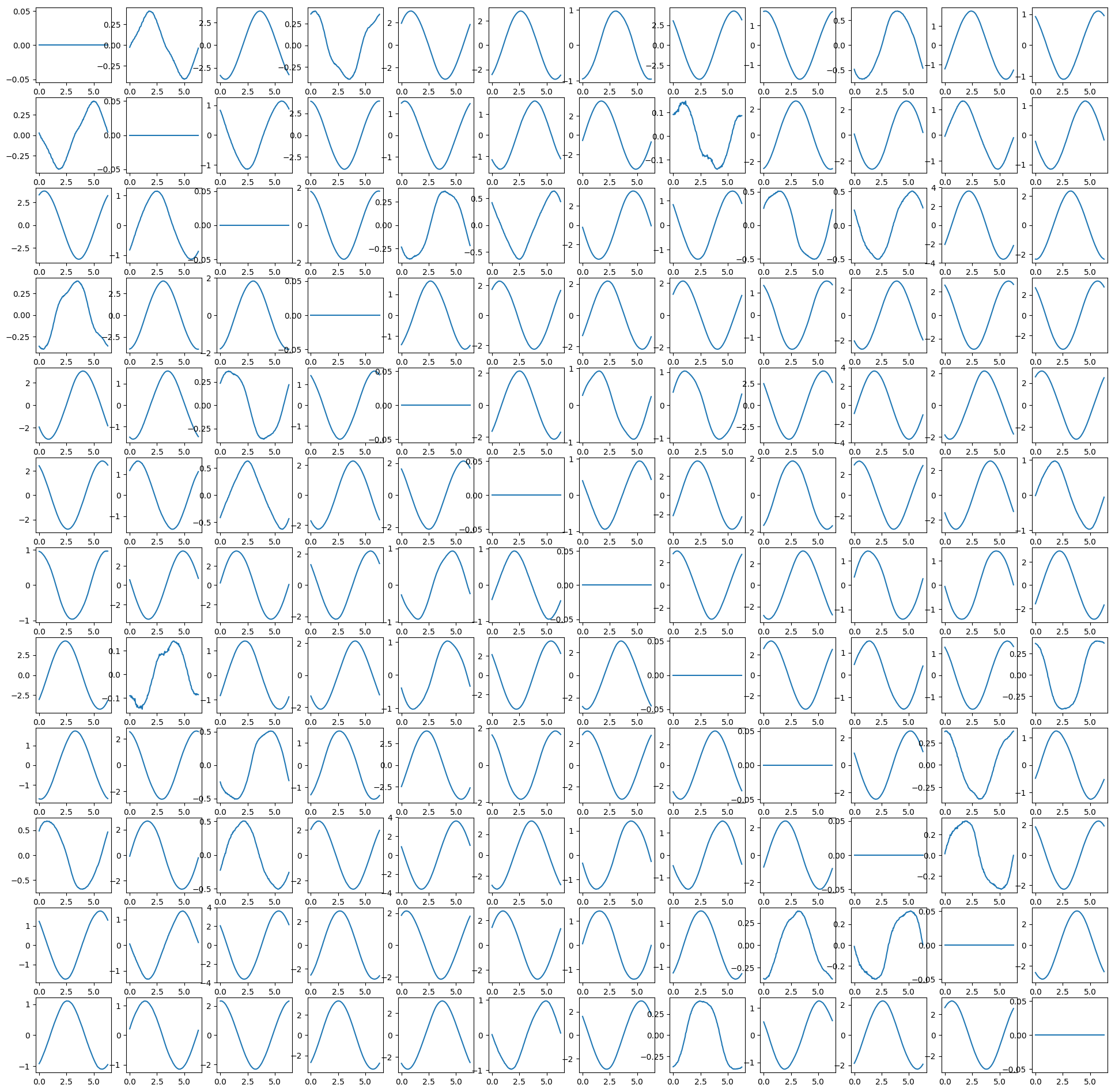}
  \caption{\small Learned patterns of a block of $\mB(\theta)$. Each subfigure shows the value of an element in $\mB(\theta)$ (vertical axis) over $\theta$ (horizontal axis).}
  \label{fig:B}
\end{figure}

\paragraph{Gaussian kernel.} Because $A(\vx, \vx')$ is a sharp Gaussian kernel, it contains a whole range of frequencies in the 2D Fourier domain. The learned response maps of the grid cells span a range of frequencies or scales too.  Each module or block focuses on a certain frequency band, which corresponds to the metric of the module.  We assume individual place field $A(\vx, \vx')$  to exhibit a Gaussian shape, rather than a Mexican-hat pattern (with balanced excitatory center and inhibitory surround) as assumed in previous basis expansion models ~\citep{dordek2016extracting,sorscher2019unified} of grid cells. The Mexican-hat or difference of Gaussians pattern occupies a ring in the 2D Fourier domain. It corresponds to a module in our model. But we use isotropic condition to enforce each module to be within a ring in the Fourier domain, and we use different modules to pave the whole Fourier domain.

\subsection{Error correction} 

We begin by assessing the ability of error correction of the learned system following the setting in Proposition 1. Specifically, for a given location $\vx$, suppose the neurons are perturbed by Gaussian noise: $\vv = \vv(x) + {\bf \epsilon}$, where ${\bf \epsilon} \sim \mathcal{N}(0, \tau^2 \mI_d)$ and $\tau^2 = \alpha^2 (\|\vv(\vx)\|^2/d)$, so that $\alpha^2$ measures the variance of noise relative to the average magnitude of $(v_i(\vx)^2, i = 1, ..., d)$ and $\alpha$ measures the relative standard deviation. We infer the 2D position $\hat{\vx}$ from $\vv$ by $\hat{\vx} = \arg\min_{\vx'}\|\vv - \vv(\vx')\|^2$. Fig. \ref{fig: error_correction0} displays the inference error over the relative standard deviation $\alpha$ of the added Gaussian noise. We also show the results using the learned $\vu(\vx')$ for inference (Eq. (\ref{eq:de2})). The system works remarkably well even if $\alpha = 2$. 

\begin{figure}[ht]
\centering
  \includegraphics[width=.4\linewidth]{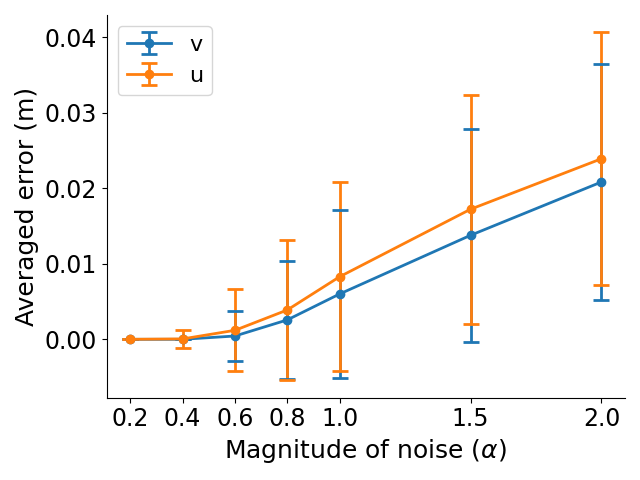}
  \caption{\small Error correction results following the setting in Proposition 1. The error bar stands for the standard deviation over 1,000 trials. ``$\vv$'' means decoding by Eq. (\ref{eq:de1}), and ``$\vu$'' means decoding by Eq. (\ref{eq:de2}). The squared domain is $1$m $\times$ $1$m. }
  \label{fig: error_correction0}
\end{figure}
	
	We further assess the ability of error correction in long distance path integration. Specifically, along the way of path integration, at every time step $t$, two types of errors are introduced to $\vv_t$: (1) Gaussian noise or (2) dropout masks, i.e., certain percentage of units are randomly set to zero. Fig. \ref{fig: error_correction} summarizes the path integration performance with different levels of injected errors for $T=100$, using $\vv(\vx')$ (Eq. (\ref{eq:de1})) or $\vu(\vx')$ (Eq. (\ref{eq:de2})) for decoding.  The results show that re-encoding at each step helps error correction, especially for dropout masks. For Gaussian noise, even without decoding and re-encoding at each step, decoding at the final step alone is capable of removing much of the noise. 	
	Notably, with re-encoding, the path integration works well even if Gaussian noise with $\alpha = 1$ is added or $50\%$ units are randomly dropped out at each step, indicating that the learned system is robust to different sources of errors.  
	
\begin{figure}[ht]
\centering
  \includegraphics[width=.49\linewidth]{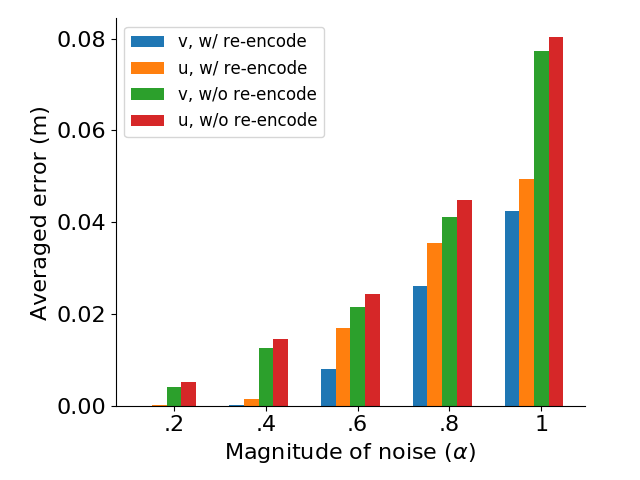}
    \includegraphics[width=.49\linewidth]{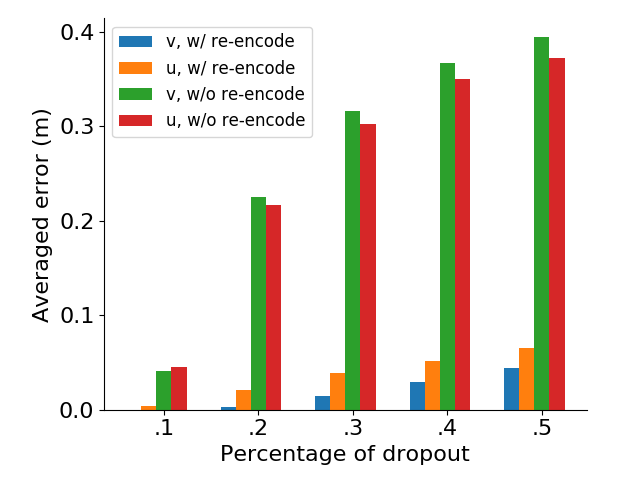}
  \caption{\small Path integration results with different levels of injected errors. {\em Left}: Gaussian noise. The magnitude of noise is measured using the average of the squared magnitudes of the units in $\vv(\vx)$ as the reference. {\em Right}: dropout masks. Certain percentage of units are randomly set to zero at each step. ``$\vv$'' means decoding by Eq. (\ref{eq:de1}), and ``$\vu$'' means decoding by Eq. (\ref{eq:de2}).  The squared domain is $1$m $\times$ $1$m.}
  \label{fig: error_correction}
\end{figure}

\subsection{Non-linear transformation model} 

We test our method with a non-linear transformation model:
\begin{eqnarray}
F(\vv(\vx), \Delta r, \theta) = {\rm ReLU}(\exp(\mB(\theta) \Delta r) \vv(\vx)),    \label{eq:nonlinear}
\end{eqnarray}
where we insert ${\rm ReLU}(a) = \max(0, a)$  into the linear transformation model. 

We use numerical differentiation to define directional derivative
\begin{eqnarray}
   f_\theta(\vv(\vx)) = [\vv(\vx + \delta \vx) - \vv(\vx)]/\delta r, 
\end{eqnarray} 
where $\delta \vx = (\delta r \cos \theta, \delta r \sin \theta)$, with pre-defined $\delta r$. The reason for numerical differentiation is because the derivative of ReLU is an indicator function, which is not differentiable. $f_\theta(\vv(\vx))$ needs to be differentiable for minimizing the loss function (an alternative to numerical differentiation is to use sigmoid function to approximate the indicator function). 

We continue to use the same loss function except with the above two changes. Interestingly, regular hexagon patterns continue to emerge (average gridness score 0.83, percentage of grid cells 70.21$\%$). See Fig. \ref{fig: nonlinear} for the learned patterns of $\vv(\vx)$.

\begin{figure}[h]
\centering
  \includegraphics[width=\linewidth]{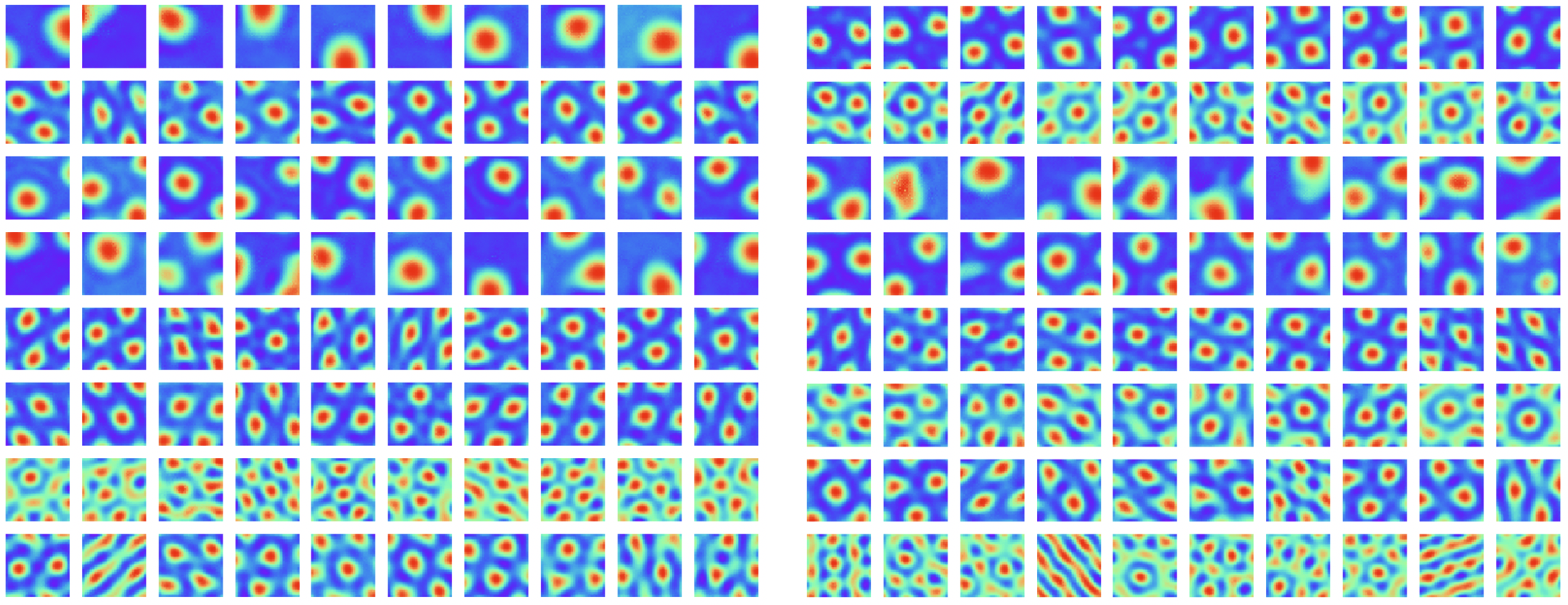}
  \caption{\small Learned patterns of $\vv(\vx)$ with the non-linear transformation model (Eq. (\ref{eq:nonlinear})). Every row shows the learned patterns within the same block.}
  \label{fig: nonlinear}
\end{figure}

\subsection{Path planning} 

Our grid cells model can be applied to path planning. Specifically, according to \cite{stachenfeld2017hippocampus}, the adjacency kernel can be modeled by
\begin{eqnarray}
 A_\gamma(\vx, \vx') = \mathbb{E} \left[ \sum_{t=0}^{\infty} \gamma^t 1(\vx_t = \vx') | \vx_0 = \vx)\right] = \langle \vv(\vx), \vu_\gamma(\vx')\rangle,
\end{eqnarray} 
where $\gamma$ is the discount factor that controls the temporal and spatial scales, $\mathbb{E}$ is with respect to a random walk exploration policy, and $1(\cdot)$ is the indicator function. For random walk in open field, $A_\gamma(\vx, \vx') \propto \exp(-\|\vx-\vx'\|^2/2\sigma_{\gamma}^2)$, where $\sigma_{\gamma}^2$ depends on $\gamma$. 

To enable path planning, we need kernels of both big and small spatial scales to account for long and short distance planning respectively. To this end, we discretize $\gamma$ into a finite list of scales, and learn a list of corresponding $\vu_\gamma(\vx’)$ together with $\vv(\vx)$ and $\mB(\theta)$ using the loss function in Section 5 of the main text. 

With the learned model, path planning can be accomplished by steepest ascent on the adjacency to the target position. Specifically, let $\hat{\vx}$ be the target or destination. Let $\vx^{(t)}$ be the current position in the path planning process, encoded by $\vv(\vx^{(t)})$. The agent plans the next displacement by steepest ascent on 
\begin{eqnarray}
A_\gamma(\vx^{(t)} + \Delta \vx, \hat{\vx}) = \langle \vv(\vx^{(t)} + \Delta \vx), \vu_\gamma(\hat{\vx})\rangle
=\langle \mM(\Delta \vx) \vv(\vx^{(t)}), \vu_\gamma(\hat{\vx})\rangle, 
\end{eqnarray}
over allowed $\Delta \vx$ within a single step, where $\mM(\Delta \vx) = \exp(\mB(\theta) \Delta r)$, with $\Delta \vx = (\Delta r \cos \theta, \Delta r \sin \theta)$.
We plan 
\begin{eqnarray}
\Delta \vx^{(t+1)} = \arg \max_{\Delta \vx } A_\gamma(\vx^{(t)} + \Delta \vx, \hat{\vx}),  \label{eq:plan}
\end{eqnarray}
and let $\vx^{(t+1)} = \vx^{(t)} + \Delta \vx^{(t+1)}$.  

The scale $\gamma$ is selected as the smallest one that satisfies $ \max_{\Delta \vx } \langle \mM(\Delta \vx) \vv(\vx^{(t)}), \vu_\gamma(\hat{\vx})\rangle > .2$. We can also use $\max_\gamma \max_{\Delta \vx } \langle \mM(\Delta \vx) \vv(\vx^{(t)}), \vu_\gamma(\hat{\vx})\rangle$ for scale selection. 

We test path planning in the open field environment. The model is first learned using a single-scale kernel function $A_\gamma(\vx, \vx’) = \exp(-\|\vx-\vx'\|^2/2\sigma_{\gamma}^2)$ where $\sigma_\gamma = 0.07$. Then we assume a list of three scales: $\sigma_\gamma = [0.07, 0.14, 0.28]$ and learn the corresponding list of $\vu_\gamma(\vx’)$. The pool of allowed displacements for a single step is defined as:  $d r$ can be 1 or 2 grids, while $\theta$ can be chosen from 200 discretized angles over $[0, 2\pi]$. Fig. \ref{fig: planning} demonstrates several examples of path planning in the open field environment, where the agent is able to plan straight path to the target. When $\vx^{(t)}$ is far from the target, kernel with large $\sigma_\gamma$ is chosen, and as $\vx^{(t)}$ approaches the target, the chosen kernel gradually switches to the one with small $\sigma_\gamma$. A planning episode is treated as a success if the distance between $\vx^{(t)}$ and target is smaller than $0.5$ grid within $40$ time steps. The agent achieves a success rate of 100$\%$ (tested for $10,000$ episodes). 

\begin{figure}[h]
\centering
  \includegraphics[width=.24\linewidth]{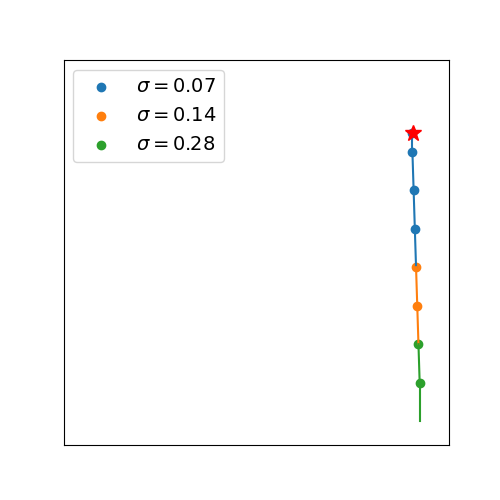}
  \includegraphics[width=.24\linewidth]{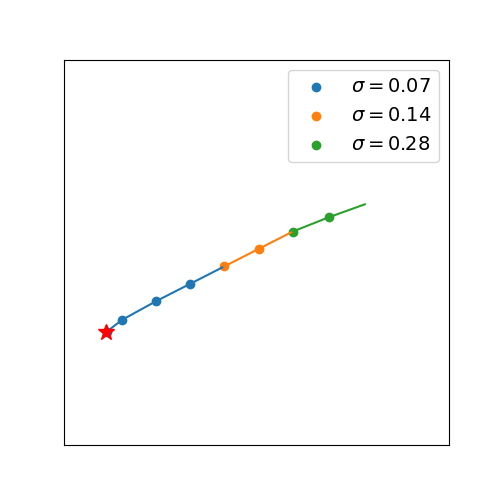}
  \includegraphics[width=.24\linewidth]{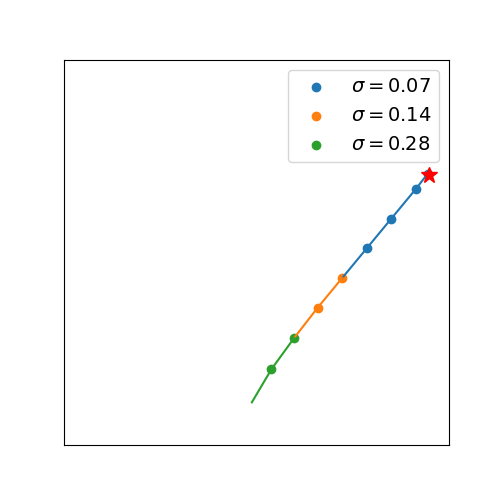}
  \includegraphics[width=.24\linewidth]{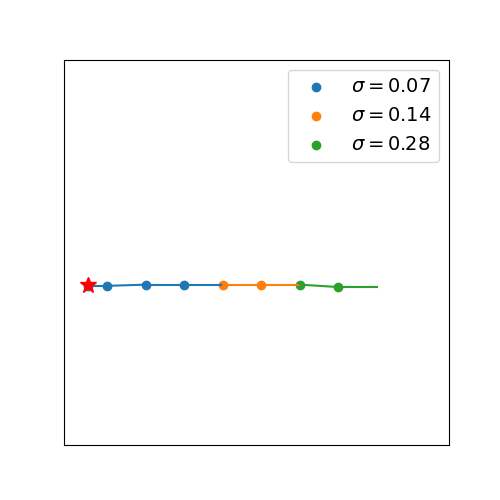}
  \caption{\small Examples of path planning results in an open field environment. The target is shown as a red star. }
  \label{fig: planning}
\end{figure}

For a field with obstacles or rewards, we can learn the deformed $A_\gamma(\vx, \vx’)$ and $(\vv(\vx), \vu_\gamma(\vx’))$ by temporal difference learning with a random walk exploration policy as suggested in \cite{stachenfeld2017hippocampus}. After learning $A_\gamma(\vx, \vx’)$ and $(\vv(\vx), \vu_\gamma(\vx’))$, we can continue to use Eq. (\ref{eq:plan}) for path planning. We shall further study it in future work.

\subsection{Integrating egocentric vision} 

When the agent moves in darkness, it can infer its self-position by integrating self-motion, as illustrated by our experiments on path integration. If there is visual input, the agent can infer its self-position (as well as head direction) from the visual image alone. We extend our grid cells model to study this problem of egocentric vision, which is important in computer vision. 

Specifically, suppose the agent navigates in a 3D scene such as a room, and the height of the eye (or camera) remains fixed. Suppose at 2D self-position $\vx$ and with head direction $\theta$, the agent sees an image $\mI$, which is called a posed image. We use the vector representation $\vv(\vx)$ in our original grid cells model to represent the 2D self-position $\vx$, and use another vector representation $\vh(\theta)$ to represent the head direction $\theta$. If the agent changes its head direction from $\theta$ to $\theta + \Delta \theta$, $\vh(\theta)$ is transformed to
\begin{eqnarray} 
\vh(\theta + \Delta \theta) = \exp(\mC\Delta\theta)  \vh(\theta).
\end{eqnarray}
We assume that there are $K$ modules or blocks in $\vh(\theta)$ and $\mC$ is skew-symmetric. This is similar to the transformation of $\vv(\vx)$ in our grid cells model. 

$(\vx, \theta)$ is called the pose of the camera (or eye), and we call $(\vv(\vx), \vh(\theta))$ the pose embedding. 

To associate the pose embedding $(\vv(\vx), \vh(\theta))$ with the posed image $\mI$, we use a vector representation or scene embedding $\vs$ to represent the 3D scene which is shared across different posed images of the same scene, and we learn a generator network $G_\beta$ that maps the embeddings $\vs$ and $(\vv(\vx), \vh(\theta))$ to the posed image $\mI$:
\begin{eqnarray} 
\mI = G_\beta(\vs, \vv(\vx), \vh(\theta)) + \epsilon,
\end{eqnarray} 
where the generator $G_\beta$ is parametrized by a multi-layer deconvolutional neural network with parameters $\beta$, and $\epsilon$ is the residual error. 

Given the above assumptions, we introduce two extra loss terms in addition to the loss function described in Section 5 of the main text. 
\begin{eqnarray}
&&L_3 = \sum_{k=1}^K \mathbb{E}_{\theta, \Delta \theta} \|\vh_k(\theta + \Delta \theta) - \exp(\mC_k \Delta \theta)\vh_k(\theta)\|^2,\\
&&L_4 = \mathbb{E} \|\mI - G_\beta(\vs, \vv(\vx), \vh(\theta))\|^2.
\end{eqnarray} 
$L_3$ is to model the head rotation, and $L_4$ is to model the generation of the posed image. 

During training, we alternatively update $(G_\beta, \vs)$ and $(\vv(\vx), \mB(\theta), \vu(\vx’), \vh(\theta), \mC)$ by gradient descent on the overall loss function that is a linear combination of $L_0$, $L_1$ and $L_2$ in the main text, as well as $L_3$ and $L_4$ introduced above. 

The learned model enables two useful applications: 

(a) {\bf Novel view synthesis}. Given an unseen pose $(\vx, \theta)$, the model can predict the corresponding posed image by $G_\beta(\vs, \vv(\vx), \vh(\theta))$.

(b) {\bf Inference of pose}, i.e., self-position $\vx$ and head direction $\theta$, from posed image $\mI$ alone. Specifically, after training the model, we can learn an additional inference network $F_\xi$ that maps an observed posed image $\mI$ to its pose embedding $\vv(\vx)$ and $\vh(\theta)$. The inference network is learned by minimizing the $\ell_2$ distance between the predicted and true pose embeddings: $\mathbb{E}\|(\vv(\vx), \vh(\theta)) - F_\xi(\mI)\|^2$. Then given an unseen posed image $\mI$, we can infer the pose by $\arg \min_{\vx, \theta}\|(\vv(\vx), \vh(\theta)) - F_\xi(\mI)\|^2$. In this task, $F_\xi(\mI)$
 is the estimate of $(\vv(\vx), \vh(\theta))$, and it is likely that this estimate contains error. This error will translate to the error in the estimated $(\vx, \theta)$. Thus our theoretical analysis of error translation in the main text is highly relevant, and the isotropic scaling condition is motivated by the analysis of error translation.

We conduct experiments on a dataset generated by the Gibson Environment \cite{xia2018gibson}, which provides tools for rendering images of different poses in 3D rooms. Specifically, we select 20 areas of size 2m $\times$ 2m from different rooms and render about 28k 64 $\times$ 64 RGB posed images for each area. The camera height is fixed and the camera can only rotate horizontally. The scene embedding vector $\vs$ is of 512 dimensions. Both $\vv(\vx)$ and $\vh(\theta)$ are of $192$ dimensions, partitioned into $K=16$ modules. 

Hexagon patterns still emerge in the learned $\vv(\vx)$ (average gridness score 0.71). For novel view synthesis, we evaluate the performance on 374k testing posed images. The resulting peak signal-to-noise ratio (PSNR) between synthesized images and ground truth images is 25.17, indicating that the model can generate reasonable unseen posed images. Fig. \ref{fig: 3d} demonstrates several examples of the novel view synthesis results. 

\begin{figure}[h]
\centering
  \includegraphics[width=.3\linewidth]{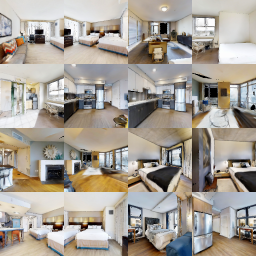} \hspace{3mm}
  \includegraphics[width=.3\linewidth]{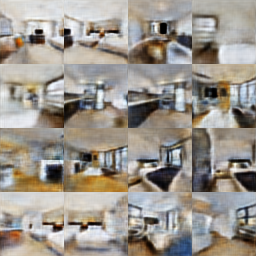}
  \caption{\small Examples of synthesizing novel views. {\em Left}: Ground truth unseen posed images. {\em Right}: synthesized unseen posed images. }
  \label{fig: 3d}
\end{figure}

For inference of pose (self-position $\vx = (\evx_1, \evx_2)$ and head direction $\theta$), we evaluate the performance on the same 374k testing posed images and report the average inference error in Table \ref{tab: inference}. The estimates are reasonably accurate. 
\begin{table}
	\begin{center}
\caption{\small Average error of pose inference.}
\label{tab: inference}
\begin{tabular}{ cccc }
\toprule
  & $\evx_1$ & $\evx_2$ & $\theta$ \\ \midrule
 Error & .0225m & .0230m & 1.37$^{\rm o}$ \\  
 \bottomrule
\end{tabular}
\end{center}
\end{table}

\subsection{Ablation studies}

\paragraph{Isotropic scaling condition is necessary for hexagon grid patterns.}
A natural question is whether the isotropic scaling condition (condition 2) is important for learning hexagon grid patterns. To verify this, we learn the model by removing the loss term $L_2$ (Eq. (19) in the main text) from the loss function, which constrains the model to meet condition 2. As shown in Fig. \ref{fig: ablation-isotropic}, more strip-like patterns emerge without $L_2$, indicating that condition 2 is important for hexagon grid patterns to emerge. 

\begin{figure}[h]
\centering
  \includegraphics[width=\linewidth]{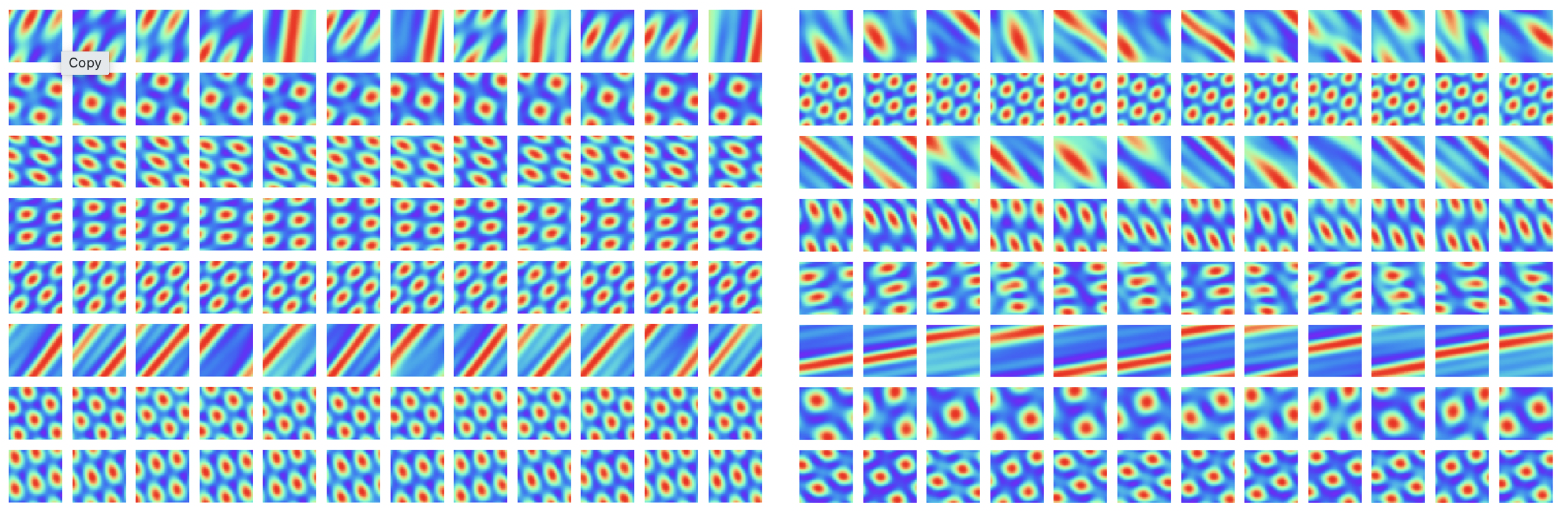}
  \caption{\small Learned neurons without loss term $L_2$, which is the constraint on isotropic scaling condition. More strip-like firing patterns emerge.}
  \label{fig: ablation-isotropic}
\end{figure}

\paragraph{Assumption of $\vu(\vx') \geq 0$ is not necessary for hexagon grid patterns.}
During training, we make an assumption of $\vu(\vx') \geq 0$ to make sure the connections from grid cells to place cells are excitatory \citep{zhang2013optogenetic,Rowland2018}. However, we want to emphasize this is not a key assumption in our model. Fig. \ref{fig: ablation-u} demonstrates the learned neurons in the network without assuming $\vu(\vx') \geq 0$, where hexagonal grid firing patterns also emerge. The average gridness score is 0.82 and the percentage of grid cells is $87.50\%$. However, the grid activations can be either positive/excitatory (in red color) or negative/inhibitory (in blue color). 

\begin{figure}[h]
\centering
  \includegraphics[width=\linewidth]{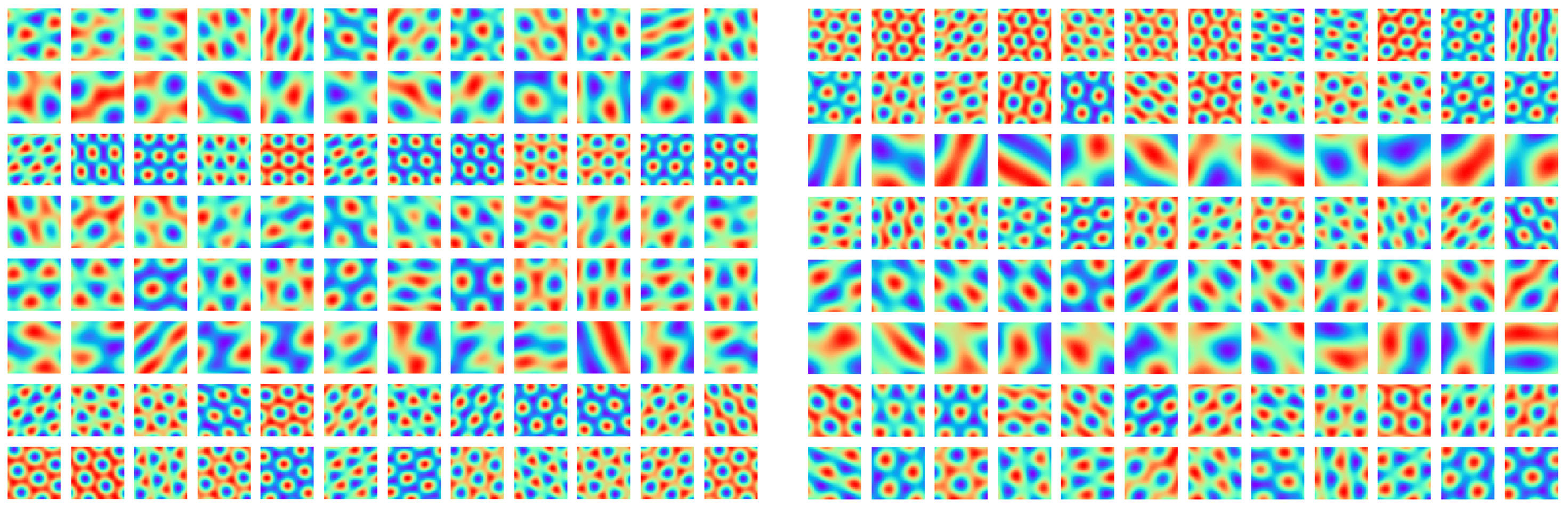}
  \caption{\small Learned neurons without the assumption of $\vu(\vx') \geq 0$. Hexagonal grid firing patterns also emerge, with the grid activations being either positive/excitatory (in red color) or negative/inhibitory (in blue color).}
  \label{fig: ablation-u}
\end{figure}

\paragraph{Skew-symmetric assumption of $\mB(\theta)$ is not important for hexagon grid patterns.} To make the linear transformation a rotation, we have assumed that $\mB(\theta)$ is skew-symmetric, i.e., $\mB(\theta) = -\mB(\theta)^\top$. Nonetheless, this assumption is not important for the emergence of hexagon grid patterns. Fig. \ref{fig:skew} demonstrates the learned neurons without assuming that $\mB(\theta)$ is skew-symmetric. Hexagon grid firing patterns emerge in most of the neurons, with only one block of square grid firing patterns. 

\begin{figure}[h]
\centering
  \includegraphics[width=\linewidth]{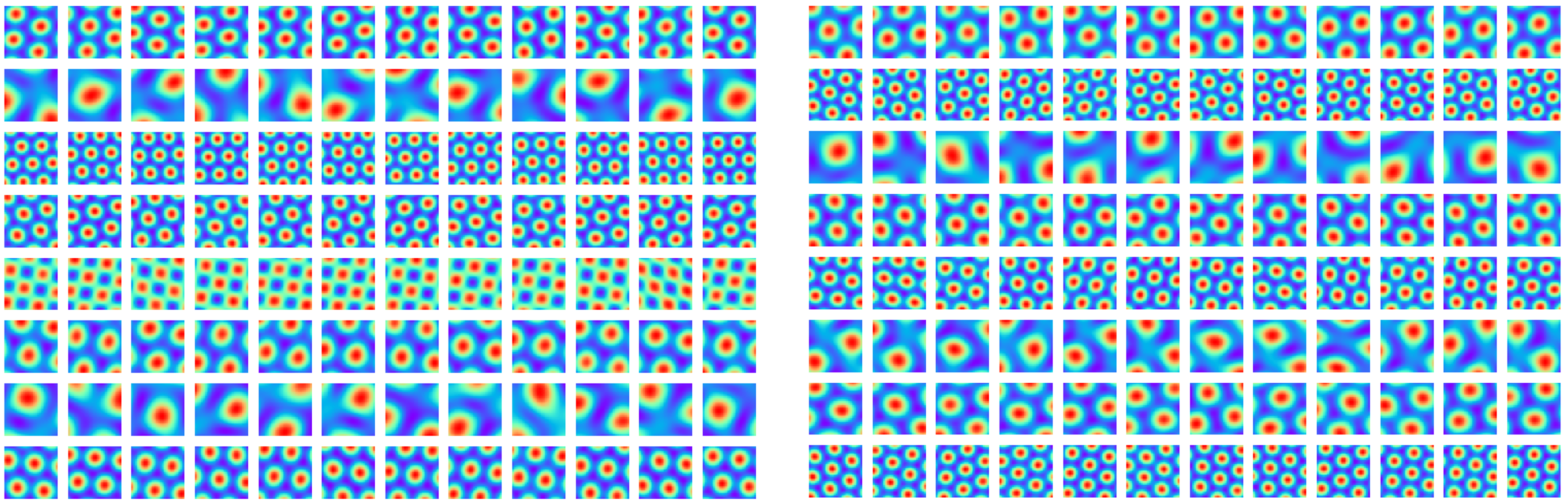}
  \caption{\small Learned neurons without skew-symmetric assumption of $\mB(\theta)$. Hexagonal grid firing patterns emerge in most of the neurons, with a block of square grid firing patterns.}
  \label{fig:skew}
\end{figure}

\paragraph{Number and sizes of blocks do not matter.} It is worthwhile to mention that the emergence of hexagonal grid firing patterns in the learned neurons are not due to specific design of the block size or the number of blocks. Fig. \ref{fig: units} visualizes the learned neurons by fixing the total number of neurons at $192$ and altering the block size and number of blocks. Hexagon patterns emerge in all the settings.

\begin{figure}[h]
	\centering
	\includegraphics[width=.8\linewidth]{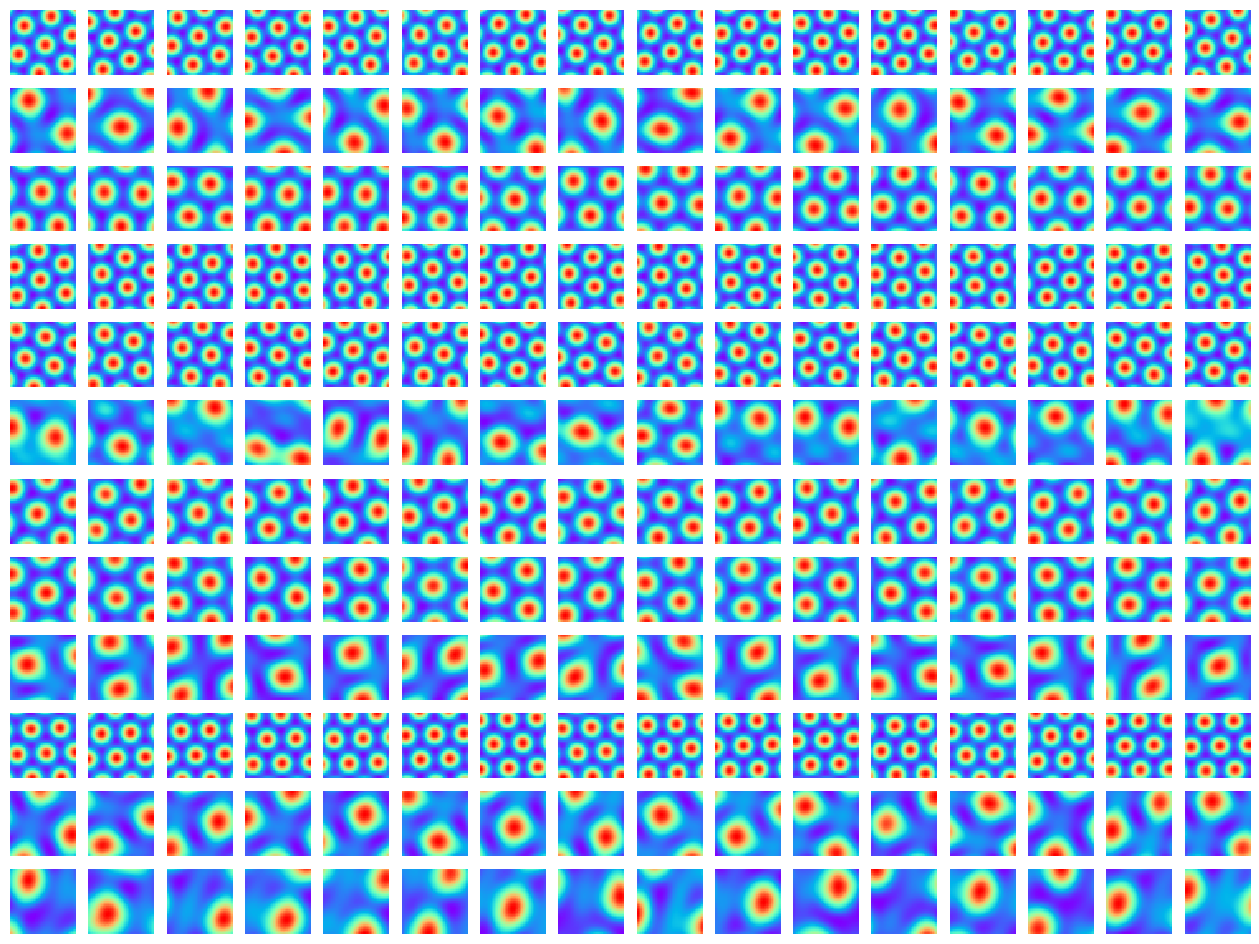}\\
	(a) Block size $= 16$ \vspace{1mm} \\
	\includegraphics[width=\linewidth]{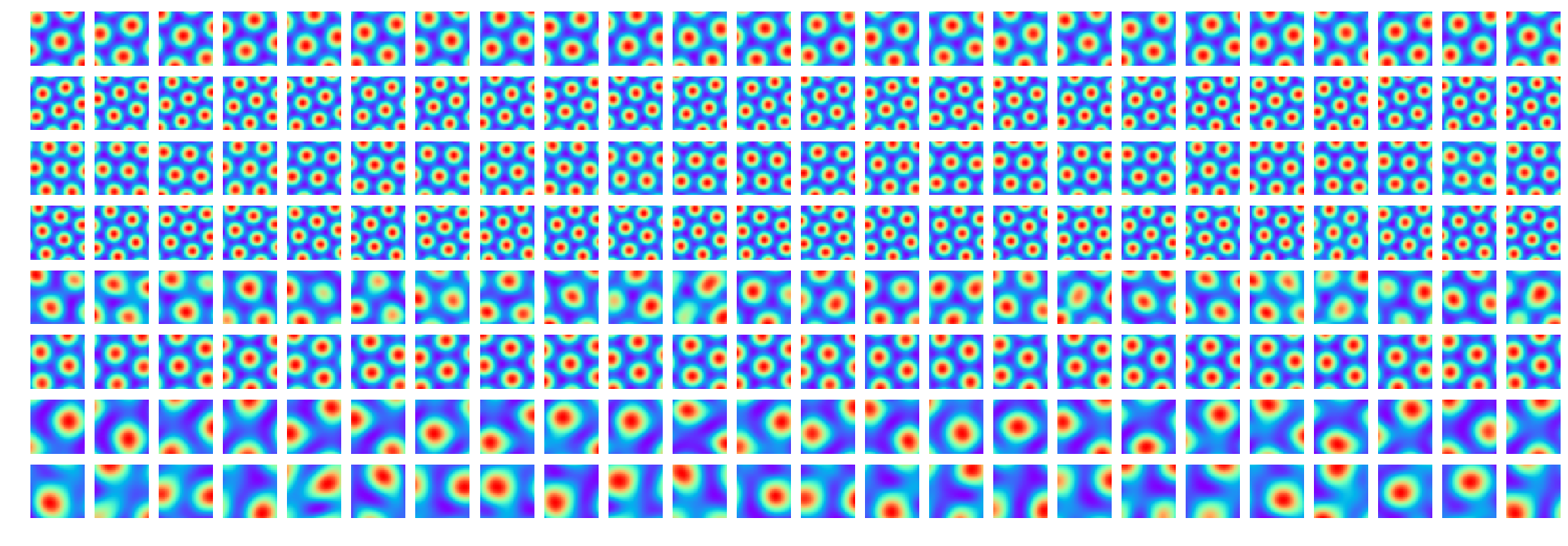}\\
	(b) Block size $= 24$ \vspace{1mm}\\
	\includegraphics[width=\linewidth]{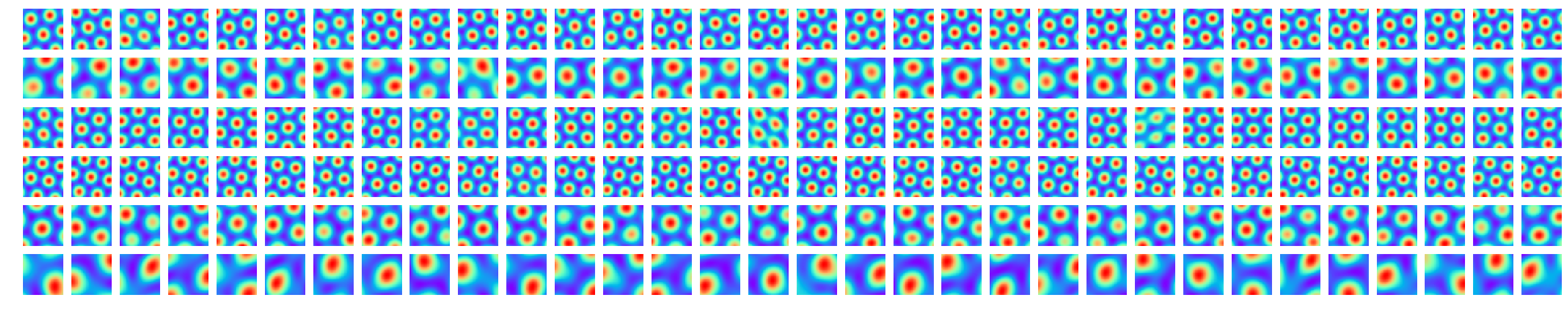}\\
	(c) Block size $= 32$ \vspace{1mm}\\
	\includegraphics[width=\linewidth]{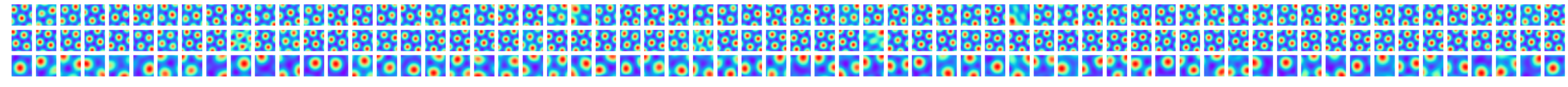}\\
	(d) Block size $= 64$
	\caption{\small Learned patterns of $\vv(\vx)$ with different block sizes. The total number of units is fixed at $192$. Every row shows the learned patterns within the same block.}
	\label{fig: units}
	\end{figure}
	
\paragraph{Multiple blocks or modules are necessary for learning grid patterns of multiple scales.} We further try to fully remove the assumption of blocks or modules; i.e., we learn a single block of $\mB(\theta)$. Fig. \ref{fig:ablation-single} shows the learned neurons and the corresponding autocorrelograms. All the learned neurons share similar large scales, which indicates that the high frequency part of $A(\vx, \vx')$ may not be fitted very well.  

\begin{figure}[h]
	\centering
	\includegraphics[width=.31\linewidth]{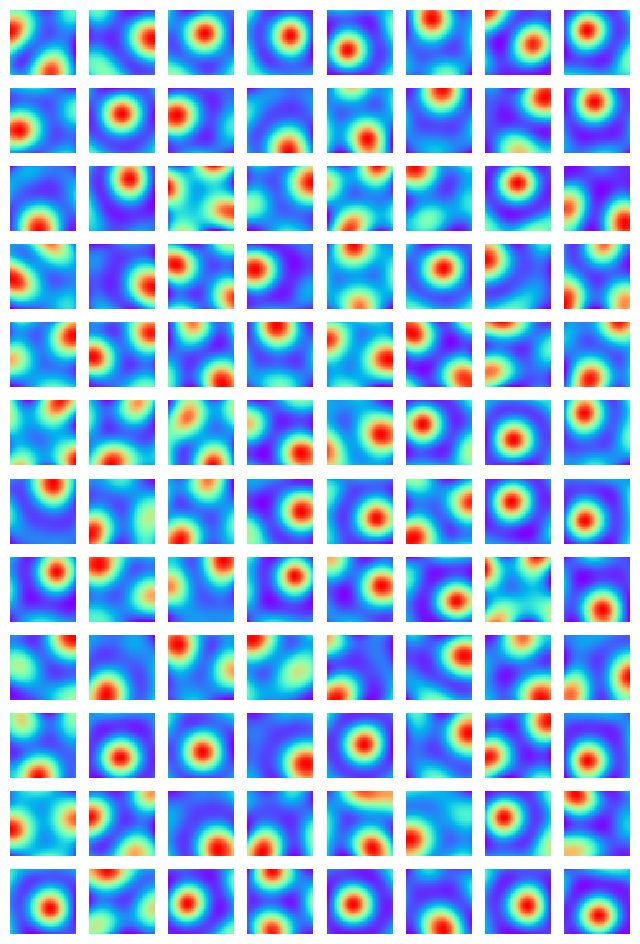} \hspace{2mm}
	\includegraphics[width=.3\linewidth]{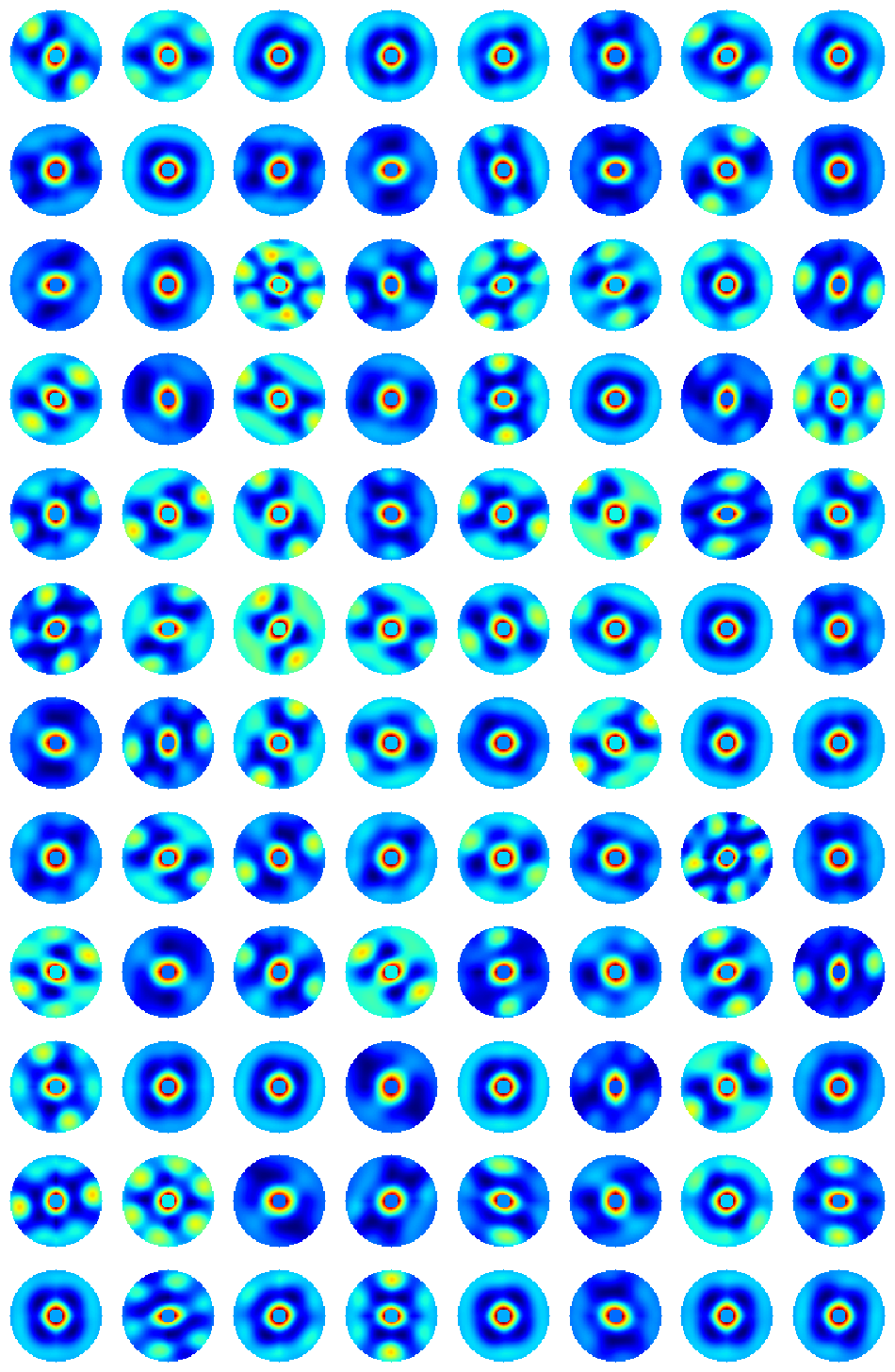}
	\caption{\small {\em Left}: learned neurons with a single block of $\mB(\theta)$. The firing patterns has a single large scale, meaning that the high frequency part of $A(\vx, \vx')$ is not fitted very well. {\em Right}: autocorrelograms of the learned neurons. Some exhibit clear hexagon grid patterns, while the other do not, probably because the scale of those grid patterns are beyond the scope of the whole area.  }
  \label{fig:ablation-single}
\end{figure}

\end{document}